\theoremstyle{plain}
\newtheorem{theorem}{Theorem}%
\newtheorem{proposition}[theorem]{Proposition}%
\newtheorem{lemma}[theorem]{Lemma}%
\theoremstyle{definition}
\newtheorem{remark}{Remark}%
\newcommand{\Comments}{1}
\newcommand{\mynote}[2]{\ifnum\Comments=1\textcolor{#1}{#2}\fi}
\newcommand{\mytodo}[2]{\ifnum\Comments=1%
	\todo[linecolor=#1!80!black,backgroundcolor=#1,bordercolor=#1!80!black]{#2}\fi}
\renewcommand\theta{\vartheta}
\newcommand{\R}{\ensuremath{\mathbb{R}}}%
\newcommand{\rhom}{\ensuremath{\rho_{\text{max}}}}%
\newcommand{\hrace}{\ensuremath{\widehat{\rho_{\text{ace}}}}}%
\newcommand{\rpear}{\ensuremath{\rho_{\text{Pear}}}}%
\newcommand{\hrpear}{\ensuremath{\widehat{\rho_{\text{Pear}}}}}%
\newcommand{\rspear}{\ensuremath{\rho_{\text{S}}}}%
\newcommand{\hrspear}{\ensuremath{\widehat{\rho_{\text{S}}}}}%
\newcommand{\hdcor}{\ensuremath{\widehat{\text{dCor}}}}%
\newcommand{\hxi}{\ensuremath{\widehat{\xi}}}%
\newcommand{\tstar}{\ensuremath{t^*}}%
\newcommand{\hhsic}{\ensuremath{\widehat{\text{hsic}}}}%
\newcommand{\hbcor}{\ensuremath{\widehat{\text{bCor}}}}%
\newcommand{\rhoamn}{\ensuremath{\rho_{\text{L,l}}}}%
\newcommand{\rhono}{\ensuremath{\rho_{\text{l1}}}}%
\newcommand{\rhont}{\ensuremath{\rho_{\text{l2}}}}%
\newcommand{\hrhoamn}{\ensuremath{\widehat{\rho_{\text{L,l}}}}}%
\newcommand{\hrhono}{\ensuremath{\widehat{\rho_{\text{l1}}}}}%
\newcommand{\hrhont}{\ensuremath{\widehat{\rho_{\text{l2}}}}}%
\newcommand{\rhoam}{\ensuremath{\rho_{\text{L}}}}%
\newcommand{\rhoro}{\ensuremath{\rho_{\text{R1}}}}%
\newcommand{\rhort}{\ensuremath{\rho_{\text{R2}}}}%
\newcommand{\hrhoam}{\ensuremath{\widehat{\rho_{\text{L}}}}}%
\newcommand{\hrhoro}{\ensuremath{\widehat{\rho_{\text{R1}}}}}%
\newcommand{\hrhort}{\ensuremath{\widehat{\rho_{\text{R2}}}}}%
\newcommand{\hrho}{\ensuremath{\hat{\rho}}}
\newcommand{\rranksn}{\ensuremath{\rho_{\text{rk,2,n}}}}%
\newcommand{\cL}{\ensuremath{\mathcal{L}}}%
\newcommand{\Ex}{\ensuremath{\mathbb{E}}}%
\newcommand{\PP}{\ensuremath{\mathbb{P}}}%
\newcommand{\cX}{\ensuremath{\mathcal{X}}}%
\newcommand{\cY}{\ensuremath{\mathcal{Y}}}%
\newcommand{\dd}{\ensuremath{{\mathrm d}}}%
\newcommand{\D}{\,\mbox{\textrm D}}
\newcommand{\Cov}{\ensuremath{\text{Cov}}}%
\newcommand{\ind}{{\boldsymbol 1}}
\DeclareMathOperator{\Var}{Var}
\newcommand{\Kommentar}[1]{%
	\ifthenelse{\boolean{comment}}{#1}{}
}
\begin{document}
\title{Lancaster correlation - a new dependence measure linked to maximum correlation}


\author{Hajo Holzmann$^{1}$ and Bernhard Klar\,$^{2}$   \\
	        \small{$^{1}$ Fachbereich Mathematik und Informatik, Philipps-Universit\"at Marburg, Germany}  \\
	        \small{$^{2}$ Institut f\"ur Stochastik, Karlsruher Institut f\"ur Technologie (KIT), Germany} \\
	        \small{holzmann@mathematik.uni-marburg.de, bernhard.klar@kit.edu}}

\date{\today }
\maketitle

\begin{abstract}
We suggest novel correlation coefficients which equal the maximum correlation for a class of bivariate Lancaster distributions while being only slightly smaller than maximum correlation for a variety of further bivariate distributions. In contrast to maximum correlation, however, our correlation coefficients allow for rank and moment-based estimators which are simple to compute and have tractable asymptotic distributions. Confidence intervals resulting from these asymptotic approximations and the covariance bootstrap show good finite-sample coverage. In a simulation, the power of asymptotic as well as permutation tests for independence based on our correlation measures compares favorably with competing methods based on distance correlation or rank coefficients for functional dependence, among others. Moreover, for the bivariate normal distribution, our correlation coefficients equal the absolute value of the Pearson correlation, an attractive feature for practitioners which is not shared by various competitors.  We illustrate the practical usefulness of our methods in applications to two real data sets.
\end{abstract}

{\small
{\textbf{\textit{Keywords:}}} Correlation coefficient; Independence tests; Lancaster distribution; Rank statistics. }

\section{Introduction}

Measuring the correlation between random quantities is a basic task for statisticians. The most commonly used correlation coefficients for real-valued observations are Pearson's correlation coefficient as well as Spearman's $\rho$ and Kendell's $\tau$, designed for detecting linear and monotone relationships, respectively. Still, additional requirements such as characterizing independence also in multiple dimensions \citep{szekely2007measuring, zbMATH07072331} or even in abstract Hilbert spaces \citep{pan:2020}, regression dependence \citep{dette2013copula, wang2017generalized, chatterjee2021new}, or consistency axioms \citep{bergsma2014consistent, weihs2018symmetric}, among others, led to a variety of novel correlation coefficients.

In terms of axiomatic properties, the benchmark may be considered to be the maximum correlation introduced by \citet{hirschfeld1935connection} and \citet{gebelein1941statistische}. It is defined for non-constant random variables $X, Y$ on the same probability space
 with values in potentially abstract measurable spaces $\cX$, $\cY$ by
\begin{equation}\label{eq:maxcor}
 \rhom(X,Y) = \sup_{\phi,\psi} \, \Ex\big[\phi(X)\, \psi(Y)\big],
 \end{equation}
where the supremum is taken over
\begin{equation}\label{eq:normalized}
	\phi:\cX \to \R, \ \Ex[\phi(X)] = 0, \  \Ex[ \phi^2(X)]=1, \quad
\end{equation}
%
and similarly for $\psi$ and $Y$. \citet{renyi1959measures} sets up a list of axiomatic properties for correlation coefficients for real-valued quantities, including symmetry, characterization of independence and functional dependence, and invariance under monotone transformations, and shows that these are satisfied by maximum correlation. 
Further, for the normal distribution it equals the absolute value of Pearson's correlation coefficient \citep{gebelein1941statistische}, see also \citet{lancaster1957some}.

While maximum correlation is still of broad theoretical and applied interest \citep{lopez2013randomized, anantharam2013maximal, huang2016model, baharlouei2019r}, there are some downsides with its involved definition as a supremum in \eqref{eq:maxcor}.
First, it is hard to estimate, and while the alternating conditional expectations (ACE) - algorithm of \citet{breiman1985estimating} is a well-established tool to do so, it gives only raw estimates so that tests for independence or confidence intervals are not readily available.
Furthermore, maximum correlation is sometimes considered to be close to or even equal to $1$ for too many joint distributions \citep{pregibon1985estimating, chatterjee2021new}.

We propose novel correlation coefficients which are motivated by a simple expression of the maximum correlation for the class of bivariate Lancaster distributions. In contrast to various other recent proposals, for the bivariate normal distribution, our correlation coefficients thus equal the absolute value of the Pearson correlation. 
They rely on the maximum of the linear correlation between the original quantities on the one hand, and on that of their squares on the other hand, and thus are easily visualized and interpretable. As we shall illustrate, our correlation coefficients equal or are only slightly less than maximum correlation for a variety of further bivariate distributions. In contrast to maximum correlation, however, these novel coefficients allow for rank - and moment-based estimators which are simple to compute and have tractable asymptotic distributions. These result in asymptotic confidence intervals with good coverage probabilities which are not readily available for maximum correlation itself nor for various other competitors. While other correlation measures such as distance correlation or the coefficient by \citet{chatterjee2021new} are better suited for discriminating regression-type dependence from independence, our simulations show that independence tests based on our coefficients have higher power against various other forms of dependence. Thus, our coefficients can be used complementary to these correlation measures.


More formally suppose that $X$ and $Y$ are marginally standard normally distributed with joint density $f_{X,Y}$. \citet{sarmanov1967probabilistic} characterize the densities $f_{X,Y}$ which are of finite squared contingency in the sense of \citet{pearson1904theory} and admit diagonal expansions in terms of Hermite polynomials, see also \citet{lancaster1958structure, lancaster1963correlations}. Their results imply that the maximum correlation is then given as
\begin{equation}\label{eq:expressmaxcor0}
	\rhom(X,Y) = \max\big(|\rpear(X,Y)|, \rpear(X^2,Y^2) \big),
\end{equation}
where $\rpear$ is the Pearson correlation coefficient. For the bivariate normal, this reduces to $|\rpear(X,Y)|$.
The expansions in terms of canonical forms on which \eqref{eq:expressmaxcor0} is based give rise to the class of so-called Lancaster distributions, for recent results beyond normal margins see \citet{buja1990remarks, koudou1998lancaster, papadatos2013simple}.

The representation \eqref{eq:expressmaxcor0} is the formal motivation for introducing our correlation coefficient, which we shall call the \textit{Lancaster correlation coefficient}. We set
\begin{equation}\label{eq:expressmaxcor}
    \rhoam(X,Y) = \max\big(|\rpear(\tilde X, \tilde Y)|, \big|\rpear(\tilde X^2,\tilde Y^2) \big|\big),\quad
    \tilde X = \Phi^{-1}(F_X(X)), 
\end{equation}
$\tilde Y$ is defined analogously to $\tilde X$, and $\Phi$ denotes the distribution function of the standard normal and $F_X$ and $F_Y$ are the distribution functions of $X$ and $Y$, respectively.
If $X$ and $Y$ have continuous marginal distributions such that the distribution of $(\tilde X, \tilde Y)$ falls into the class discussed in \citet{sarmanov1967probabilistic}, then actually $\rhoam(X,Y) = \rhom(X,Y)$, otherwise, $\rhoam(X,Y)$ is a lower bound for $\rhom(X,Y)$.
For general distributions, $\rhoam(X,Y)$ measures the linear correlation between $\tilde X$ and $\tilde Y$ as well as between their squares. As we shall illustrate numerically this suffices to capture most of $\rhom(X,Y)$ for a variety of distributions. Also note that $\rpear(\tilde X, \tilde Y)$ is closely related to the Spearman correlation coefficient, with scores from the normal quantile function in the spirit of the van der Waerden test \citep{van1952order}. Such transformations are known under various names, e.g. rank-based inverse normal in psychology \citep{beasley2009,bishara2017}.

The structure of the paper is as follows. In Section \ref{sec2}, after reviewing some further details leading to the expression \eqref{eq:expressmaxcor0}, we propose a second correlation coefficient with a simple moment standardization. Next in a motivating simulation, we consider various bivariate distributions which capture distinct forms of dependence and compute estimates of several correlation coefficients including our novel proposals. Thereby we illustrate the potential of these correlation coefficients to detect dependence as well as the fact that various competing methods have drastically smaller values than Pearson correlation for the bivariate normal distribution.
Furthermore, we show that our proposals give estimates close to those of maximal correlation. Here we use rank - and moments-based estimators, which are then formally introduced in Section \ref{sec3}. We derive the asymptotic distribution of our estimators and discuss how to construct confidence intervals and tests for independence. Let us stress that asymptotic confidence intervals are not readily available for maximum correlation nor for several further competitors. In Section \ref{sec4} we conduct a simulation study. First in Section \ref{sec-asym-ci} numerical experiments show good coverage properties of confidence intervals based on asymptotics and the covariance bootstrap. Next in Section \ref{sec-ind-test} we show that the power of asymptotic as well as permutation tests for independence based on our correlation measures compares favorably in simulations to various competitors except for regression-type dependence. These competitors include maximum correlation itself estimated with the ACE algorithm of \citet{breiman1985estimating}, distance correlation \citep{szekely2007measuring},  the rank coefficient for functional dependence by \citet{chatterjee2021new}, the $\tau^*$ - coefficient of \citet{bergsma2014consistent}, ball covariance \citep{pan:2020} or reproducing kernel Hilbert space methods \citep{zbMATH05070116}. In Section \ref{sec:dataex}, the practical usefulness of our correlation coefficients to detect dependence is illustrated on two data sets, one from \cite{fox2019}, Sec. 10.7.1, containing the 2008-09 nine-month academic salary for Professors in a U.S.~college, the other consisting of pairs $(r_t,r_{t-1})$ of hourly observations of log-returns of the three cryptocurrencies Bitcoin (BTC), Ethereum (ETH) and Ripple (XRP) from May 16, 2018 to October 27, 2021, as considered in \citet{holzmann2023} in a forecasting context. Section \ref{sec:conclude} concludes, while the appendix contains technical proofs of the main results. Some further technical details and simulation results are provided in the supplementary material.
An R package \verb+lancor+\footnote{\href{https://github.com/BernhardKlar/lancor}{https://github.com/BernhardKlar/lancor}} containing functions to compute the coefficients together with confidence intervals, and to perform permutation tests, is available on \verb+github+.




\section{Maximum correlation and the Lancaster correlation coefficient} \label{sec2}

%

In this section, we further motivate and illustrate our novel correlation coefficient $\rhoam(X,Y)$ in \eqref{eq:expressmaxcor}. We start by recalling the nonparametric class of densities for which \eqref{eq:expressmaxcor0}, our point of departure, holds true. Let $\varphi(x)$ denote the density of the standard normal distribution, and suppose that $X$ and $Y$ have standard normal margins. The density $f_{X,Y}$ of $(X,Y)$ is by definition of finite squared contingency if and only if $f_{X,Y}(x,y)/(\varphi(x)\, \varphi(y))$ is square - integrable in the $L_2$-space with the bivariate standard Gaussian measure. Then if the canonical variables, that is the singular functions of the conditional expectation operators, are the Hermite polynomials $H_k$, $f_{X,Y}$ has an expansion
\begin{equation}\label{eq:Lancasterexpansion}
f_{X,Y}(x,y) = \varphi(x)\, \varphi(y)\, \Big(1 + \sum_{k=1} c_k\, H_k(x)\, H_k(y)\Big),
\end{equation}
see \citet[Theorem 3]{lancaster1958structure}. 
Here, $c_k$ is the Pearson correlation coefficient of $H_k(X)$ and $H_k(Y)$. Conversely, \citet{sarmanov1967probabilistic} show that \eqref{eq:Lancasterexpansion} defines a density if and only if $c_k = \Ex[V^k]$ is the moment sequence of some random variable $V$ with support in $(-1,1)$. By expanding the functions $\phi$ and $\psi$ in \eqref{eq:normalized} in the definition of the maximum correlation into Hermite series, one finds that $\rhom(X,Y)$ equals the supremum over $|c_k|$, and from the representation of $c_k$ as moments of $V$ this results in $\max(|c_1|,c_2)$, that is, \eqref{eq:expressmaxcor0}.

Before we proceed to numerical illustrations on a variety of distributions, as an alternative to \eqref{eq:expressmaxcor} let us also introduce the \textit{linear Lancaster correlation coefficient} as
\begin{equation}\label{eq:expressmaxcor1}
	\rhoamn(X,Y) = \max\big(|\rpear(X, Y)|, |\rpear(\breve X^2,\breve Y^2)| \big),
\end{equation}
where $\breve X = (X - \Ex[X])/\text{sd}(X)$ and $\breve Y$ are the standardized versions of $X$ and $Y$ and we assume $\Ex[X^4]<\infty$, $\Ex[Y^4]<\infty$.

\medskip
To illustrate, we first consider bivariate normal distributions with correlation coefficient $\rho=0.1$ and 0.3, called BVN(0.1) and BVN(0.5).
Second, we consider bivariate normal mixtures $p f_1(x,y,-1/2) + (1-p)$ $f_1(x,y,1/2)$, where $f_1(x,y,\rho)$ denotes the density of a bivariate normal distribution with standard normal marginals and correlation coefficient $\rho$.
For $p=1/2$, one has $\rpear=0,\, \rhom=1/4$, called NM1. Distribution NM2 uses $p=1/3$, resulting in $\rpear=1/6,\, \rhom=1/4$
\citep{sarmanov1967probabilistic}.
Scatterplots of the two distributions based on a sample of size 10000 are shown in the left and middle panel of Fig. \ref{fig1:example}.

\begin{figure}
    \centering
    \includegraphics[width=0.95\textwidth]{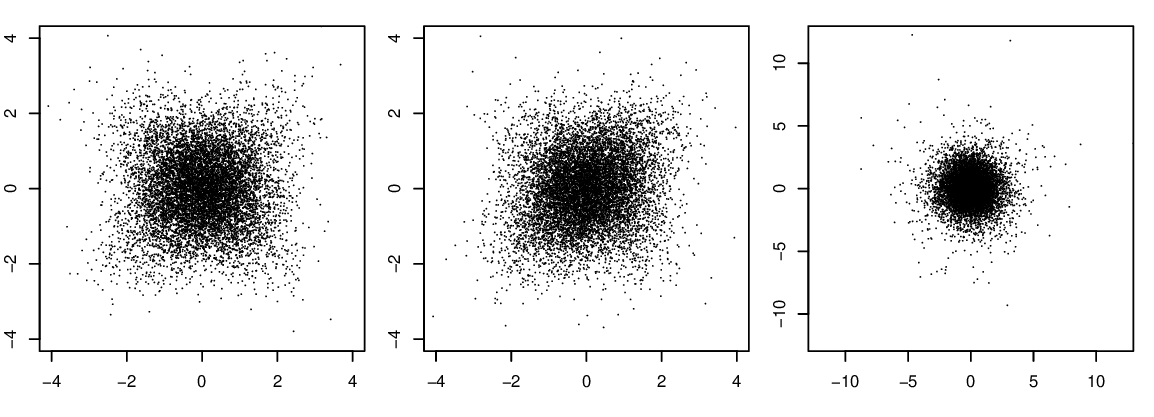}
    \caption{Scatterplot with sample size $n=10000$ of two bivariate normal mixtures (NM1, NM2) and bivariate $t$-distribution with 5 degrees of freedom (BVT5)}
    \label{fig1:example}
\end{figure}

Estimates of various dependence measures for these datasets can be found in Table \ref{dist-tab1}. Besides $\hrpear$ and $\hrspear$, the empirical versions of Pearson's and Spearman's correlation coefficients, we use $\hrhoamn$ and $\hrhoam$, defined in Section \ref{sec3}, as estimators of $\rhoamn$ and $\rhoam$. Further, $\hrace$, the estimate of $\rhom$ computed by the ACE algorithm in the R package \verb+acepack+ 
\citep{acepack2016}; the empirical distance correlation $\hdcor$ \citep{energy2022}; the empirical value $t^*$ of the $\tau^*$ coefficient of \cite{bergsma2014consistent}, computed with the R package \verb+TauStar+ \citep{taustar2019}; $\hxi$, the empirical version of the $\xi$-coefficient of \cite{chatterjee2021new}; finally, the empirical value $\hbcor$ of the ball correlation coefficient of \cite{pan:2020}, computed with the R package \verb+Ball+ \citep{Zhu:2021}.

{\bfseries Findings:} 
As expected, the values of $\hrpear, \hrhoamn$ and $\hrhoam$ coincide for BVN(0.1) and BVN(0.3), and $\hrace$ is nearly identical. Spearman's $\hrspear$ is a bit smaller under normality, which also holds for $\hdcor$. The values of $t^*,\hxi$ and $\hbcor$ are close to 0 for $\rho=0.1$, and still quite small for $\rho=0.3$. In particular, the ratio $\rho/\hbcor$ is around 100.]

For NM1, the values of $\hrhoamn, \hrhoam$ and $\hrace$ are close to $\rhom=1/4$, as expected. Whereas $\hdcor$ is quite small, $t^*$, $\hxi$ and $\hbcor$ do not detect the dependence at all.
For NM2, $\hrpear,\hrspear,\hrace$ and $\hdcor$ estimate the linear dependence 1/6; only $\hrhoamn$ and $\hrhoam$ have estimates close to $\rhom=1/4$. Again, $t^*$ and $\hxi$ do not detect the dependence.

\begin{table}
\centering
\begin{tabular}{lccccccccc}
  \hline
distribution & \hrpear & \hrspear & \hrhoamn & \hrhoam & \hrace & \hdcor & \tstar & \hxi & \hbcor \\ 
  \hline
BVN(0.1)   &  0.092 & 0.089 & 0.092 & 0.092 & 0.097 & 0.085 & 0.003 & 0.011 & 0.001 \\ 
BVN(0.3)   &  0.290 & 0.274 & 0.290 & 0.290 & 0.291 & 0.256 & 0.025 & 0.055 & 0.005 \\ 
NM1        &  0.002 &-0.006 & 0.279 & 0.279 & 0.256 & 0.062 & 0.001 & 0.005 & 0.002 \\ 
NM2        &  0.154 & 0.145 & 0.237 & 0.238 & 0.155 & 0.147 & 0.008 & 0.014 & 0.004 \\ 
BVT5       & -0.010 &-0.000 & 0.190 & 0.210 & 0.210 & 0.060 & 0.000 & 0.000 & 0.000 \\ 
BVC        &  0.193 & 0.004 & 0.968 & 0.718 & 0.992 & 0.780 & 0.006 & 0.046 & 0.040 \\ 
Unif-disc  &  0.002 & 0.002 & 0.335 & 0.267 & 0.334 & 0.076 & 0.002 & 0.022 & 0.004 \\ 
GARCH(2,1) &  0.021 & 0.020 & 0.563 & 0.516 & 0.559 & 0.139 & 0.002 & 0.032 & 0.010 \\    \hline
\end{tabular}
\caption{Estimates of correlation measures for several distributions, each based on one sample with size $n=10000$ \label{dist-tab1}}
\end{table}

The right panel of Fig. \ref{fig1:example} shows a scatterplot of the bivariate (standard) $t$-distributions with 5 degrees of freedom and linear correlation $\rpear=0$, called BVT5. A similar plot for the bivariate standard Cauchy distribution (BVC) and a uniform distribution on the unit disc (Unif-disc) is shown in the left and middle panel of Fig. \ref{fig2:example}. Pearson correlation exists and equals zero for the first and the third distribution; $\rspear=0$ in all three cases.
As is well known, the only independent spherically symmetric distributions are normal distributions. For Unif-disc, it is known that $\rhom=0.3$ \citep{buja1990remarks}.

\begin{figure}
    \centering
    \includegraphics[width=0.95\textwidth]{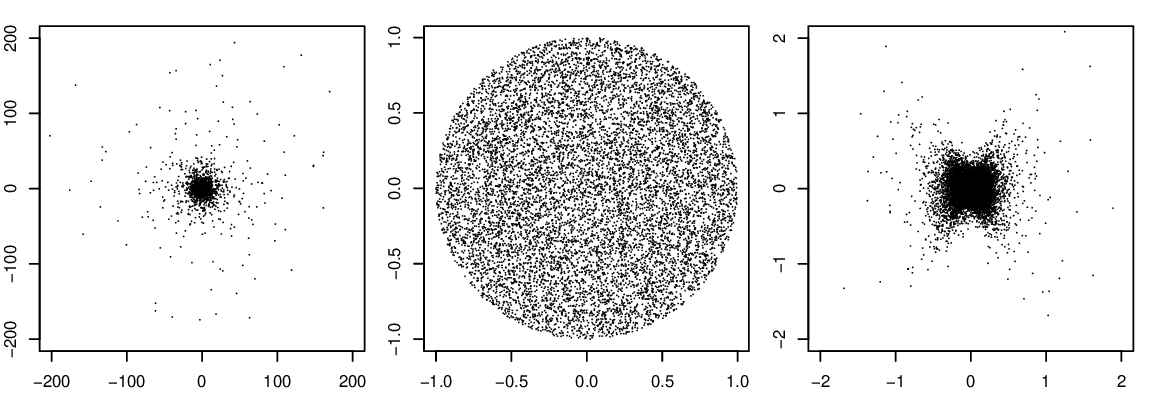}
    \caption{Scatterplot with sample size $n=10000$ of bivariate Cauchy (BVC), uniform distribution on the disc (Unif-disc) and GARCH(2,1)}
    \label{fig2:example}
\end{figure}

{\bfseries Findings:} Clearly, $\hrpear$ and $\hrspear$ can not capture the dependence for either of the three distributions; the same holds for $t^*$, $\hxi$ and $\hbcor$. For BVT5, $\hrhoamn,\hrhoam$ and $\hrace$ give similar values around 0.2; $\hdcor$ is rather small. For BVC, $\hrhoam$ and $\hrace$ take quite large values; $\hrpear, \hrhoamn$ and $\hdcor$ are not defined. For Unif-disc, $\hrhoamn, \hrhoam$ and $\hrace$ take values around 0.3, $\hdcor$ is again rather small.

\smallskip
Stock returns $r_t$ 
are typically uncorrelated with their own past $r_{t-1}$ (if stock markets are efficient), but their squares $r_t^2$ and $r_{t-1}^2$ are not. To illustrate the behaviour of our new correlation measures for a typical statistical model for returns, we generate $r_t$ using a GARCH(2,1) process with parameters $\alpha=(0.01,0.6)$ and $\beta=0.2$. It is well known that $\rpear=0$ holds for the distribution of the pairs $(r_t, r_{t-1})$.
A scatterplot of $(r_t, r_{t-1})$ is shown in the right panel of Fig. \ref{fig2:example}, estimates based on a sample of size 10000 are given in Table \ref{dist-tab1}.

{\bfseries Findings:} $\hrpear$ and $\hrspear$ are close to zero as expected, but the same holds for  $t^*$, $\hxi$ and $\hbcor$. While $\hdcor$ is around 0.15, $\hrhoamn, \hrhoam$ and $\hrace$ take values larger than 0.5.


\section{Estimation and asymptotic inference} \label{sec3}
Let us discuss estimation of $\rhoamn(X,Y)$ and of $\rhoam(X,Y)$ based on an i.i.d.~sample $(X_1, Y_1),$ $\ldots,$ $(X_n,Y_n)$ from the distribution of $(X,Y)$.

First consider $\rhoamn(X,Y)$, set $\rhono = \rpear(X, Y)$ and $\rhont = \rpear(\breve X^2, \breve Y^2)$, see \eqref{eq:expressmaxcor1}. Let $\hrhono$ be the sample correlation, and let $\hrhont$ be the empirical correlation of the squares of the empirically standardized observations. We set
\begin{equation}
	\hrhoamn = \max\big(|\hrhono|, |\hrhont| \big).
\end{equation}
\begin{proposition}\label{lem:asympnormest}
If $\Ex[X^8]<\infty$ and $	\Ex[Y^8]<\infty$ and $X$ and $Y$ are not almost surely constant, then
\begin{equation*} 
	\sqrt n \, \big((\hrhono, \hrhont) - ( \rhono,  \rhont) \big) 	\stackrel{\cL}{\to} \mathcal N_2\big(0, \Sigma^* \big).
\end{equation*}
For the asymptotic covariance matrix we have that $\Sigma^* = M \, \Sigma_m \, M^\top $ for a $12 \times 12$ covariance matrix $\Sigma_m$ of moments, and a $2 \times 12$ - matrix $M$ arising from the $\Delta$ - method, which are given in the appendix.
\end{proposition}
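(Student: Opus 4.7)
The plan is to recognise $(\hrhono, \hrhont)$ as a smooth function of a $12$-dimensional vector of empirical raw moments and then apply the multivariate central limit theorem together with the delta method. Concretely, I would collect the raw moment vector
\[
\mathbf{m}_1 = \bigl(X,\, Y,\, X^2,\, Y^2,\, XY,\, X^3,\, Y^3,\, X^4,\, Y^4,\, X^2Y,\, XY^2,\, X^2Y^2\bigr),
\]
and denote its expectation by $\mathbf{m}\in\R^{12}$ and its sample mean by $\hat{\mathbf{m}}$. Direct expansion shows that $\rhono = \rpear(X,Y)$ is a rational function of the first five coordinates of $\mathbf{m}$, and the identity
\[
\rhont \;=\; \frac{\Ex[(X-\mu_X)^2(Y-\mu_Y)^2]/(\sigma_X^2\sigma_Y^2)\;-\;1}{\sqrt{\bigl(\Ex[(X-\mu_X)^4]/\sigma_X^4-1\bigr)\bigl(\Ex[(Y-\mu_Y)^4]/\sigma_Y^4-1\bigr)}}
\]
exhibits $\rhont$ as a rational function of all twelve coordinates of $\mathbf{m}$ (after expanding the centred moments as polynomials in raw moments of total order at most four). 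Hence there is a continuously differentiable map $g:U\subset\R^{12}\to\R^2$, defined on the open set $U$ where $\sigma_X$, $\sigma_Y$ and the variances of $\breve X^2,\breve Y^2$ are strictly positive, with $g(\mathbf{m}) = (\rhono,\rhont)$ and $g(\hat{\mathbf{m}}) = (\hrhono,\hrhont)$ almost surely for large $n$.

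Next, the moment hypothesis $\Ex[X^8]<\infty$, $\Ex[Y^8]<\infty$ ensures that every coordinate of $\mathbf{m}_1$ has finite second moment; the worst case, $\Ex[(X^2Y^2)^2]=\Ex[X^4Y^4]$, is controlled by Cauchy--Schwarz. The multivariate CLT therefore yields
\[
\sqrt{n}\bigl(\hat{\mathbf{m}} - \mathbf{m}\bigr) \;\stackrel{\cL}{\to}\; \mathcal N_{12}(0,\Sigma_m), \qquad \Sigma_m = \Cov(\mathbf{m}_1).
\]
Applying the delta method with the map $g$ at $\mathbf{m}$ then gives
\[
\sqrt{n}\,\bigl((\hrhono,\hrhont)-(\rhono,\rhont)\bigr) \;\stackrel{\cL}{\to}\; \mathcal N_2\bigl(0,\,M\Sigma_m M^{\top}\bigr),
\]
with $M = Dg(\mathbf{m})$ the $2\times 12$ Jacobian, which is exactly the stated conclusion.

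The genuinely new ingredient compared to the classical asymptotics of Pearson's correlation is the second coordinate, and the main obstacle is therefore the bookkeeping needed to derive the explicit form of $M$: one must apply the chain rule through the standardisation $\breve X = (X-\mu_X)/\sigma_X$, then through squaring, and finally through the Pearson correlation formula for $(\breve X^2,\breve Y^2)$. This is purely algebraic and, as announced in the statement, I would relegate the explicit $2\times 12$ Jacobian $M$ and the $12\times 12$ matrix $\Sigma_m$ to the appendix rather than compute them in the main text.
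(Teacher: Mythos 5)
Your proposal is correct and takes essentially the same route as the paper's own proof: both express $(\hrhono,\hrhont)$ as a smooth function of the identical vector of twelve empirical raw moments (orders $10,01,20,02,11,30,03,21,12,40,04,22$), invoke the multivariate CLT --- with your Cauchy--Schwarz bound on $\Ex[X^4Y^4]$ making explicit why eighth moments suffice --- and conclude via the delta method with $M$ the resulting $2\times 12$ Jacobian. The only cosmetic difference is that the paper organizes the Jacobian computation as $M=BA$ by factoring the map through the intermediate six-vector $\big(s_x^2,s_y^2,s_{xy},\breve{m}_{40},\breve{m}_{04},\breve{m}_{22}\big)$ and reduces the bookkeeping by assuming without loss of generality that $X$ and $Y$ are already standardized.
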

The proof is a routine application of the delta method and is provided in the supplementary material. 	
%
If $X$ and $Y$ are independent, the asymptotic covariance matrix reduces to
	\begin{align} \label{eq:asymcovmat-ind}
		\Sigma^* &= \begin{pmatrix}
			1 & \tau \\  \tau & 1
		\end{pmatrix},
		\qquad  \tau = \frac{ \Ex[\breve X^3]\,\Ex[\breve Y^3]  }{ \sqrt{(\Ex[\breve X^4]-1)(\Ex[\breve X^4]-1)} },
	\end{align}
	and hence to the unit matrix if one of the third moments vanishes, particularly for symmetric distributions.
 Under bivariate normality but without independence the asymptotic covariance matrix $\Sigma^*$ reduces to
\begin{align} \label{eq:asymcovmat-bvn}
	\Sigma^* &= \begin{pmatrix}
		(1-\rho^2)^2  & 2\rho(1-\rho^2)^2,  \\
		2\rho(1-\rho^2)^2 & (1-\rho^2)^2 (3\rho^4+10\rho^2+1)
	\end{pmatrix},
\end{align}
where $\rho$ is the Pearson correlation. The first diagonal entry is the well-known variance in the limiting normal distribution of  $\hrpear$ (see, e.g., \cite{lehmann1998}, Example 6.5). See the supplement for the computation leading to \eqref{eq:asymcovmat-bvn}.
%

Next consider rank-based estimation of $\rhoam(X,Y)$. Set $\rhoro = \rpear(\tilde X, \tilde Y)$ and $\rhort = \rpear(\tilde X^2, \tilde Y^2)$, see \eqref{eq:expressmaxcor}.
Let $Q_i=\sum_{j=1}^n \ind\{X_j\leq X_i \}, i=1,\ldots,n,$ denote the rank of $X_i$ within $X_1,\ldots,X_n$, and, likewise, let $R_i=\sum_{j=1}^n \ind\{Y_j\leq Y_i\}, i=1,\ldots,n,$ be the rank of $Y_i$. As estimators for $\rhoro$ and $\rhort$ we propose
\begin{align*}
	\hrhoro &= \frac{\sum_{j=1}^n a(Q(j))\, a(R(j))}{ns_a^2}, \qquad 	\hrhort = \frac{\sum_{j=1}^n \big(b(Q(j)) - \bar b\big)\, \big(b(R(j))- \bar b\big)}{ns_b^2},	
\end{align*}
where we take the scores
\begin{align*}
	a(j) & =\Phi^{-1}\left(\frac{j}{n+1}\right),\quad  j=1,\ldots,n,
	\qquad
	s_a^2 = \frac{1}{n} \sum_{j=1}^n \left(a(j)-\bar{a}\right)^2,\\
	b(j) & = a^2(j), \quad
	\bar{b} = \frac{1}{n} \sum_{j=1}^n b(j), \quad
	s_b^2 = \frac{1}{n} \sum_{j=1}^n \left(b(j)-\bar{b}\right)^2.
\end{align*}
Further, we set
$$\hrhoam = \max\big(|\hrhoro|, |\hrhort| \big).$$
%
\begin{proposition}\label{lem:jointasympnormalrankgeneral}
Assume that $X$ and $Y$ are continuously distributed. Then $\sqrt n \big((\hrhoro, \hrhort)$ $- (\rhoro, \rhort) \big)$ is asymptotically normally distributed. If furthermore, $X$ and $Y$ are independent, then $\sqrt n (\hrhoro, \hrhort)$ is asymptotically bivariate standard normally distributed.
%
%
\end{proposition}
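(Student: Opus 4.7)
The plan is to follow the classical Hájek--Chernoff--Savage approach for linear rank statistics, adapted to the bivariate setting. Since $X$ and $Y$ are continuously distributed, I introduce $\xi_i = \Phi^{-1}(F_X(X_i))$ and $\eta_i = \Phi^{-1}(F_Y(Y_i))$, which are iid pairs with standard normal margins and, under the independence hypothesis, additionally mutually independent. The normal-score transforms $a(Q_i)$ and $b(Q_i)$ should be viewed as discretised approximations to $\xi_i$ and $\xi_i^2$. As a deterministic preliminary, the Riemann sums $s_a^2$ and $s_b^2$ converge to $\Var(\xi)=1$ and $\Var(\xi^2)=2$; this is legitimate despite the unboundedness of $\Phi^{-1}$ at $0$ and $1$ since $\xi$ has moments of all orders.

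The core step is a bivariate Chernoff--Savage expansion: for each choice of scores $J_1, J_2 \in \{\Phi^{-1}, (\Phi^{-1})^2\}$ one establishes
\begin{equation*}
\frac{1}{n}\sum_{j=1}^n J_1\!\left(\tfrac{Q_j}{n+1}\right) J_2\!\left(\tfrac{R_j}{n+1}\right) = \Ex[J_1(U)\, J_2(V)] + \frac{1}{n}\sum_{j=1}^n h_{J_1,J_2}(U_j, V_j) + o_p(n^{-1/2}),
\end{equation*}
where $U_j = F_X(X_j)$, $V_j = F_Y(Y_j)$, and the Hájek influence function $h_{J_1, J_2}$ is a centred iid term consisting of the product $J_1(U_j) J_2(V_j)$ together with correction terms obtained by integrating the derivatives of $J_1, J_2$ against empirical-process increments of the marginals. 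Combining these representations for the two numerators of $\hrhoro$ and $\hrhort$, the multivariate CLT yields joint asymptotic normality of the numerator vector; dividing by the deterministic denominators $s_a^2$ and $s_b^2$ and invoking Slutsky's lemma proves the first assertion.

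Under independence, $(U_j, V_j)$ are independent uniforms, so $\xi_j$ and $\eta_j$ are iid $\mathcal N(0,1)$ and mutually independent. The empirical-process correction terms in $h_{J_1, J_2}$ then vanish asymptotically because each factors through a marginal expectation of a centred score, which is zero. The leading iid terms reduce to $\xi_j \eta_j$ for the numerator of $\hrhoro$ (divided by the limit $s_a^2 \to 1$) and to $(\xi_j^2 - 1)(\eta_j^2 - 1)/2$ for the numerator of $\hrhort$ (divided by $s_b^2 \to 2$). A direct moment computation using $\Ex[\xi^k]$ for $k \le 4$ gives asymptotic variances $1$ and $1$, and asymptotic covariance $\Ex[\xi(\xi^2-1)]\,\Ex[\eta(\eta^2-1)]/2 = 0$, yielding the claimed $\mathcal N_2(0, I_2)$ limit.

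The principal obstacle is the control of the Chernoff--Savage remainder for the quadratic score $(\Phi^{-1})^2$, which diverges at the endpoints $0$ and $1$: although every required moment is finite because $\xi \sim \mathcal N(0,1)$, the remainder estimate demands careful truncation in the spirit of Hájek--Šidák, and the bivariate bookkeeping needed to derive the influence functions $h_{J_1, J_2}$ cleanly is where the real technical work lies.
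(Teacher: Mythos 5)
Your argument for the first assertion is essentially the paper's own: the ``bivariate Chernoff--Savage expansion'' you describe, with influence function consisting of the product term plus empirical-process corrections integrated against the derivatives of the scores, is exactly the expansion from the proof of Theorem 2.1 of Ruymgaart, Shorack and van Zwet (1972) that the paper invokes, and the technical work you anticipate (truncation-style control of the scores $\Phi^{-1}$ and $(\Phi^{-1})^2$ near $0$ and $1$) corresponds to the paper's verification of their Assumptions 2.1 and 2.2 via the logarithmic tail bound on $\Phi^{-1}$ and the estimate $\varphi(\Phi^{-1}(u)) \geq \sqrt{2/\pi}\,\min(u,1-u)$. Where you genuinely differ is the independence case: you deduce it from the general expansion by observing that each correction term factors through a marginal expectation of a centred score and hence vanishes when $C(u,v)=uv$; the paper instead gives a separate, more elementary argument, passing to antiranks so that under the null the statistics are distributed as linear rank statistics, and then applying the rank central limit theorem under the Noether condition together with the Cram\'er--Wold device. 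The paper's route avoids the correction terms and remainder control entirely and needs only the quantile bound $\Phi^{-1}(n/(n+1)) \leq \sqrt{2\log((n+1)/2)}$; your route is also valid, with one caveat: for $\hrhort$ the factorization argument only works if the expansion is run with the \emph{centred} quadratic score $b-\bar b$ (as the definition of $\hrhort$ indeed uses), since otherwise the correction picks up the factor $\Ex[(\Phi^{-1}(V))^2]=1 \neq 0$. Your limiting variance and covariance computations agree with the paper's.

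One step you gloss over is a genuine (if repairable) gap: ``dividing by the deterministic denominators $s_a^2$, $s_b^2$ and invoking Slutsky'' does not suffice for the first assertion. The expansion centres the numerators at finite-$n$, score-dependent quantities, whereas the proposition centres $\hrhoro$ and $\hrhort$ at the population values $\rhoro$ and $\rhort$; the discrepancy contributes deterministic terms of the form $\rhoro\,\sqrt n\,(1-s_a^2)/s_a^2$ and, for $\hrhort$, $\sqrt n\,(\rranksn - \rhort)$ involving $\bar b$ and $s_b^2$. Mere convergence $s_a^2 \to 1$, $\bar b \to 1$, $s_b^2 \to 2$ is not enough when $\rhoro, \rhort \neq 0$: one needs these Riemann-sum approximations to converge at rate $o(n^{-1/2})$. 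The paper devotes a dedicated lemma to this, proving $s_a^2 - 1 = \bar b - 1 = O\big((\log n)^{3/2}/n\big)$ and $s_b^2 - 2 = O\big((\log n)^{5/2}/n\big)$ via the mean value theorem and the same endpoint bounds on $\Phi^{-1}$ and its derivative. You correctly identified the Chernoff--Savage remainder for the quadratic score as the hard analytic point, but this recentring rate is a separate, necessary ingredient your sketch omits; it is invisible under independence, where $\rhoro = \rhort = 0$, which may be why it escaped notice.
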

The first statement follows from \citet[proof of Theorem 2.1]{ruymgaart1972asymptotic}, while the second part is more easily derived using antiranks and the rank central limit theorem under the Noether condition, see \citet[Section 13.3]{van2000asymptotic}. We provide the details in the appendix.







From Propositions \ref{lem:asympnormest} and \ref{lem:jointasympnormalrankgeneral} we may now deduce the asymptotic distributions of
$\hrhoam$ and $\hrhoamn$. To this end let $\rho_j$ and $\hrho_j$, $j=1,2,$ denote either $\rho_{\text{Rj}}$ and $\widehat{\rho_{\text{Rj}}}$ or $\rho_{\text{lj}}$ and $\widehat{\rho_{\text{lj}}}$, so that we have established that
\begin{equation}\label{eq:asympcoeff}
	\sqrt n \, \big((\hrho_1, \hrho_2) - ( \rho_1,  \rho_2) \big) 	\stackrel{\cL}{\to} (U,V) \sim \mathcal N_2\big(0, \Sigma \big),
\end{equation}
and need to deduce the asymptotic distribution of
\begin{equation}\label{eq:finalasymptotic}
	\sqrt n \, \big(\max(|\hat \rho_1|, |\hat \rho_2|) - \max(|\rho_1|, |\rho_2|) \big)
\end{equation}
\begin{theorem}\label{lem:asympdistrmax}
Assume \eqref{eq:asympcoeff}.
\begin{enumerate}
	\item If $|\rho_1|> |\rho_2|$, the asymptotic distribution in \eqref{eq:finalasymptotic} is $U$, while if $|\rho_1|< |\rho_2|$ it is $V$.

	%
	\item If $\rho_1 = \rho_2 \not = 0$, the asymptotic distribution in \eqref{eq:finalasymptotic} is $\max(U,V)$, while if $\rho_1 = - \rho_2 \not = 0$ it is $\max (-U,V)$. See Lemma \ref{lem:maxnormal} below for the distribution function and density of $\max(U,V)$ and $\max (-U,V)$.
		%
	\item If $\rho_1 = \rho_2 = 0$, the asymptotic distribution in \eqref{eq:finalasymptotic} is $\max (|U|,|V|)$. See Lemma \ref{lem:maxnormal1} in the appendix for the distribution and density functions. If $(U,V)$ in \eqref{eq:asympcoeff} are bivariate standard normally distributed, which is the case for $\hrhoam$ under independence and for $\hrhoamn$ under independence and vanishing third moments, then the distribution function of $\max (|U|,|V|)$ is $F(z)=(2\Phi(z)-1)^2$, $z>0$.
	%
	%
	
	%
	%
	%
\end{enumerate}
\end{theorem}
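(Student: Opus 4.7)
The plan is to reduce the asymptotics of $\sqrt n\,(\max(|\hat\rho_1|,|\hat\rho_2|)-\max(|\rho_1|,|\rho_2|))$ to the joint convergence in \eqref{eq:asympcoeff} by combining consistency with the delta method and the continuous mapping theorem, splitting into the three regimes of $(\rho_1,\rho_2)$ given in the statement.

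For part (i), I would assume without loss of generality that $|\rho_1|>|\rho_2|$. By consistency, which follows from \eqref{eq:asympcoeff}, the event $\{|\hat\rho_1|>|\hat\rho_2|\}$ has probability tending to one, so eventually $\max(|\hat\rho_1|,|\hat\rho_2|)=|\hat\rho_1|$. Since $\rho_1\neq 0$, the map $x\mapsto|x|$ is differentiable at $\rho_1$ with derivative $\mathrm{sign}(\rho_1)$, so the delta method yields $\sqrt n\,(|\hat\rho_1|-|\rho_1|)\stackrel{\cL}{\to}\mathrm{sign}(\rho_1)\,U$, which has the same distribution as $U$ by symmetry of the centered normal.

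For part (ii), suppose first $\rho_1=\rho_2=\rho>0$. Consistency implies that both $\hat\rho_i$ are eventually positive, hence $\max(|\hat\rho_1|,|\hat\rho_2|)=\max(\hat\rho_1,\hat\rho_2)$, and the continuous mapping theorem applied to $(x,y)\mapsto\max(x,y)$ yields
\begin{equation*}
\sqrt n\,\bigl(\max(\hat\rho_1,\hat\rho_2)-\rho\bigr)=\max\bigl(\sqrt n(\hat\rho_1-\rho),\sqrt n(\hat\rho_2-\rho)\bigr)\stackrel{\cL}{\to}\max(U,V).
\end{equation*}
The case $\rho_1=\rho_2=\rho<0$ is parallel and produces $\max(-U,-V)$, which has the same distribution as $\max(U,V)$ since $(U,V)$ and $(-U,-V)$ share the centered covariance $\Sigma$. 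For $\rho_1=-\rho_2\neq 0$ the signs of $\hat\rho_1$ and $\hat\rho_2$ are eventually opposite, and the analogous computation gives $\max(U,-V)$ or $\max(-U,V)$ according to the sign of $\rho_1$; these two are again equidistributed by the same symmetry, yielding the claimed law.

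For part (iii), since $\rho_1=\rho_2=0$ the assumption reads $\sqrt n\,(\hat\rho_1,\hat\rho_2)\stackrel{\cL}{\to}(U,V)$, and the continuous mapping theorem applied to $(x,y)\mapsto\max(|x|,|y|)$ immediately gives the limit $\max(|U|,|V|)$. When $\Sigma$ is the identity, $|U|$ and $|V|$ are independent half-normals with common distribution function $2\Phi(z)-1$ on $z\ge 0$, so independence yields $F(z)=(2\Phi(z)-1)^2$. The only point requiring care is the sign bookkeeping in part (ii) and the consistency argument used to localize onto the correct branch of the absolute value in parts (i) and (ii); beyond that, everything is a direct application of standard limit-theorem machinery.
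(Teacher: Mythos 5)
Your proposal is correct and takes essentially the same route as the paper's proof: consistency is used to localize onto the relevant branch of the maximum and of the absolute values, the continuous mapping theorem gives the limits $\max(U,V)$, $\max(-U,V)$ and $\max(|U|,|V|)$, and the sign symmetries $(U,V)\stackrel{\cL}{=}(-U,-V)$ and $(-U,V)\stackrel{\cL}{=}(U,-V)$ reduce the remaining sign configurations, with your delta-method treatment of $x\mapsto|x|$ in part \textit{(i)} being only a cosmetic variant of the paper's direct case split on the sign of $\rho_1$. Your closing computation of $F(z)=(2\Phi(z)-1)^2$ from two independent half-normals also matches the paper, which relegates the general distribution of $\max(|U|,|V|)$ to a supplementary lemma.
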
	
We give the proof in the appendix. For convenience in the following Lemma \ref{lem:maxnormal} we give the distribution function and the density of $\max(U,V)$ arising in part \textit{(ii)} in the previous theorem. The density is also given in \cite{ker2001}, equation (1), while the (distinct) expressions for density and distribution function of $\min(U,V)$ in (46.77) and (46.77) of Sec. 46.6 in \cite{kotz2000}, or in \cite{cain1994} should also lead to the same results.  
\begin{lemma}\label{lem:maxnormal}
	Suppose that
	\begin{equation}\label{eq:bivariatenormal}
		(U,V) \sim \mathcal N\left(0,\begin{pmatrix} \sigma_1^2 & \tau \, \sigma_1 \, \sigma_2 \\ \tau \, \sigma_1 \, \sigma_2 & \sigma_2^2\end{pmatrix}\right).
	\end{equation}
	Then for $Z = \max(U,V)$ with density $f_Z$ we have that
	\begin{align}
		\PP(Z \leq z) & = \int_{- \infty}^z\, \varphi(t;0, \sigma_1^2)\, \Phi\Big(\frac{z - \tau \, \sigma_2 \, t / \sigma_1}{\sigma_2\, \sqrt{1-\tau^2}} \Big)\, \dd t,\label{eq:distrmax}\\
			f_Z(z) & =  \varphi(z;0, \sigma_1^2)\, \Phi\Big(\frac{z - \tau \, \sigma_2 \, z / \sigma_1}{\sigma_2\, \sqrt{1-\tau^2}} \Big)\, + \varphi(z;0, \sigma_2^2)\, \Phi\Big(\frac{z - \tau \, \sigma_1 \, z / \sigma_2}{\sigma_1\, \sqrt{1-\tau^2}} \Big),\label{eq:densermax}
	\end{align}	
	where $\varphi(\cdot;\mu, \sigma^2)$ is the density of the $ \mathcal N(\mu, \sigma^2)$ distribution.
	
	%
%
\end{lemma}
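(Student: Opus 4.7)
The plan is to recognize that $\PP(Z\leq z) = \PP(\max(U,V)\leq z) = \PP(U\leq z, V\leq z) = F_{U,V}(z,z)$, i.e.\ the bivariate normal CDF evaluated on the diagonal, and then to exploit the standard formula for the conditional distribution of one component of a bivariate Gaussian given the other. Indeed, from the covariance structure in \eqref{eq:bivariatenormal}, conditional on $U=t$ the variable $V$ is normally distributed with mean $\tau\sigma_2 t/\sigma_1$ and variance $\sigma_2^2(1-\tau^2)$, so
\begin{equation*}
\PP(V\leq z\mid U=t) = \Phi\!\left(\frac{z - \tau\sigma_2 t/\sigma_1}{\sigma_2\sqrt{1-\tau^2}}\right).
\end{equation*}
Integrating this against the marginal density $f_U(t)=\varphi(t;0,\sigma_1^2)$ over $t\in(-\infty,z]$ gives \eqref{eq:distrmax} directly.

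For the density \eqref{eq:densermax} I would apply the chain rule to $z\mapsto F_{U,V}(z,z)$, writing
\begin{equation*}
f_Z(z) = \partial_1 F_{U,V}(u,v)\big|_{u=v=z} + \partial_2 F_{U,V}(u,v)\big|_{u=v=z}.
\end{equation*}
The first partial is $f_U(z)\,\PP(V\leq z\mid U=z)$, which produces the first summand in \eqref{eq:densermax} once the expression for the conditional distribution is substituted. The second partial is obtained by the symmetric conditioning argument, with the roles of $U$ and $V$ interchanged, and yields the second summand. Alternatively, one can differentiate the integral in \eqref{eq:distrmax} directly via Leibniz's rule; the boundary term immediately gives the first summand, and the remaining integral $\int_{-\infty}^z f_{U,V}(t,z)\,\dd t = f_V(z)\,\PP(U\leq z\mid V=z)$ is recognized as yielding the second summand after another application of the conditional normal formula.

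There is no real obstacle here: the only ingredients are the identification of $\{Z\leq z\}$ with $\{U\leq z\}\cap\{V\leq z\}$ and the explicit form of the bivariate normal conditional. The minor bookkeeping step is ensuring the symmetry argument is written cleanly, which is why I would prefer the chain-rule derivation over the direct differentiation under the integral sign.
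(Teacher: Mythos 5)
Your proof is correct and follows essentially the same route as the paper: the CDF formula via conditioning $\cL(V \mid U=t)$ and integrating over $t \le z$ is identical, and your chain rule on $z \mapsto F_{U,V}(z,z)$ is just a repackaging of the paper's Leibniz-rule differentiation of \eqref{eq:distrmax}. Your observation that $\int_{-\infty}^z f_{U,V}(t,z)\,\dd t = f_V(z)\,\PP(U\leq z\mid V=z)$ is precisely the paper's density factorization identity \eqref{eq:formaludens}, obtained by conditioning on $V$ instead of verifying the identity by direct computation.
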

We give the calculations for the result in the Lemma in the supplementary material.
	For the special case $\sigma_1=\sigma_2=:\sigma$, the density is
	\begin{equation}
		g(z;\sigma,\alpha) = \frac{2}{\sigma}\varphi\Big(\frac{z}{\sigma}\Big)\, \Phi\Big(\alpha\, \frac{z}{\sigma} \Big), \quad \alpha=\sqrt{\frac{1-\tau}{1+\tau}}
	\end{equation}
	the density of a skew-normal distribution $SN(0,\sigma,\alpha).$ In general,
	\begin{equation} \label{sn-mix}
		f_Z(z) = \frac{1}{2} \, g\Big(z;\sigma_1, \frac{\sigma_1/\sigma_2-\tau}{\sqrt{1-\tau^2}} \Big) + \frac{1}{2} \, g\Big(z;\sigma_2, \frac{\sigma_2/\sigma_1-\tau}{\sqrt{1-\tau^2}} \Big),
	\end{equation}
	a mixture of two skew-normal distributions with equal weights.

\begin{remark}[Confidence intervals]\label{rem-ci}
	Constructing confidence intervals for $\rhoam$ and $\rhoamn$ is slightly complicated by the various cases for the asymptotic distribution as detailed in Theorem \ref{lem:asympdistrmax}. First, we need to estimate the asymptotic covariance matrix in \eqref{eq:asympcoeff}. For $\rhoamn$ we can substitute empirical moments into the expressions for $\Sigma^*$ as presented in the appendix. Estimation by substitution of the asymptotic covariance matrix of $\rhoam$ would be more complicated, therefore we prefer to use bootstrap estimates of the asymptotic covariance, which can also be used in the case of $\rhoamn$. 
    For details on this bootstrap procedure, see Section \ref{sec-asym-ci}.
	Then for constructing confidence intervals, one possibility is to ignore the cases $|\rho_1| = |\rho_2|$ in Theorem \ref{lem:asympdistrmax}, and simply work with the asymptotic distribution as specified in Theorem \ref{lem:asympdistrmax}, (i) depending on the sizes and signs of the estimates $\hrho_j$, $j=1,2$. In our simulations, these intervals were only slightly anti-conservative in case $|\rho_1| = |\rho_2|$.
	
	Alternatively, observing that the asymptotic distribution of $\max(U,V)$ or $\max(-U,V)$ in Lemma \ref{lem:maxnormal}  is stochastically larger than the normal distributions of $U$ and $V$, we may construct conservative intervals at level at least $1-\alpha$ by using as right end point the $1-\alpha/2$ - quantile of $\max(U,V)$ (respectively  $\max(-U,V)$)
  as well as the $\alpha/2$ - quantile of the normal distribution of either $U$ or $V$ (depending on the sizes of $|\hrho_1|$ and $ |\hrho_2|$). A detailed description of the different types of confidence intervals can be found in subsection  \ref{sec-asym-ci}.
\end{remark}

\begin{remark}[Testing for independence]
Both $\hrhoam$ and $\hrhoamn$ can be used for testing for independence of $(X,Y)$. Since maximum correlation vanishes if and only if $X$ and $Y$ are independent, such tests will be consistent in particular against alternatives for which $(\tilde X, \tilde Y)$ respectively $(\breve X, \breve Y)$ have a density of the form \eqref{eq:Lancasterexpansion}.

For $\hrhoam$, we can construct the critical value based on the quantile of the asymptotic distribution $F(z)=(2\Phi(z)-1)^2$, $z>0$, in Theorem \ref{lem:asympdistrmax}, (iii), alternatively we can use a permutation test. 
For details on the permutation procedure, see Section \ref{sec-ind-test}.
In our simulations, both methods perform very similarly.
As for $\hrhoamn$, 
apart from a permutation test, we can also use this asymptotic distribution function when assuming vanishing third moments. Alternatively, we can estimate the parameter $\tau$ in \eqref{eq:asymcovmat-ind} using empirical moments, and then work with the distribution of $\max(|U|,|V|)$ as specified in Lemma \ref{lem:maxnormal1} in the appendix. All three methods performed rather similarly in our simulations.
	
%
\end{remark}


\section{Numerical illustrations} \label{sec4}

\subsection{Asymptotic confidence intervals} \label{sec-asym-ci}

Here, we empirically study the coverage probabilities and mean length of six types of confidence intervals for the new coefficients, using the distributions of the previous subsection.

Case 1: First, let us consider $\rhoamn$, and assume $|\rho_1| \neq |\rho_2|$. Based on Theorem \ref{lem:asympdistrmax} and Remark \ref{rem-ci}, a confidence interval with asymptotic coverage probability $1-\alpha$ is given by
\begin{align} \label{ci-typ1}
    \left[ \max\{ \hrhoamn - z_{1-\alpha/2}\, s/\sqrt{n}, 0\},  \;
           \min\{ \hrhoamn + z_{1-\alpha/2}\, s/\sqrt{n}, 1\} \right],
\end{align}
where $z_{1-\alpha/2}=\Phi^{-1}(1-\alpha/2)$, and $s$ is an estimator of $\Sigma^*_{1,1}$ if $|\rho|>|\rho_2|$, and an estimator of $\Sigma^*_{2,2}$, else.
Here, $\Sigma^*_{k,k}$ are the diagonal elements of $\Sigma^*$ defined in  Proposition \ref{lem:asympnormest}.
To estimate $\Sigma^*$ we use plug-in, replacing theoretical by empirical moments. Since, for small sample sizes, empirical variances can take negative values, we fix a small $\delta$, e.g. $\delta=10^{-6}$. If $\Sigma^*_{k,k}=0$ for $k=1$ or $k=2$, we set the corresponding value to $\delta$, and, additionally, we set $\Sigma^*_{1,2}=0$.

Confidence intervals based on this plug-in estimator together with the bounds in (\ref{ci-typ1})  are denoted by $\text{ci}_{L,l}^{(p)}$.
An alternative method using a bootstrap estimator of $\Sigma^*$ is denoted by   $\text{ci}_{L,l}^{(b)}$.
Similarly, using a bootstrap estimate of the covariance matrix in the asymptotic distribution of the rank estimator yields intervals denoted by $\text{ci}_{L}^{(b)}$.
For the bootstrap procedure, we resample with replacement from the observations and write $\widehat{\rho}_{L,b}^\ast$ for the estimate of $\rho_L$ from the $bth$ bootstrap sample, $b=1,\ldots,B$. The bootstrap mean and the bootstrap covariance matrix are given by 
\begin{align*}
 \bar{\rho}_L^\ast &= \frac1B \sum_{b=1}^B \widehat{\rho}_{L,b}^\ast, \quad
 \widehat{\Sigma}_B = \frac{1}{B-1} \sum_{b=1}^B \left(\widehat{\rho}_{L,b}^\ast - \bar{\rho}_L^\ast\right) \left(\widehat{\rho}_{L,b}^\ast - \bar{\rho}_L^\ast\right)^\prime.
\end{align*}
Since $\widehat{\Sigma}_B$ estimates $\Cov(\widehat{\rho}_{L})$,  $n\widehat{\Sigma}_B$ is the bootstrap estimate of $\Sigma^*$.

Case 2: Next, assume $|\rho_1|=|\rho_2|=a>0$. Putting $\sigma_1=(\Sigma^*_{1,1})^{1/2}, \sigma_2=(\Sigma^*_{2,2})^{1/2}$ and $\tau=\Sigma^*_{1,2}/(\sigma_1\sigma_2)$, the asymptotic distribution of $\sqrt{n}(\hrhoamn-a)$ is given in (\ref{sn-mix}).
Let $q_p$ denote the $p$-quantile of this distribution, which is, by definition, larger in the usual stochastic order than the asymptotic normal distribution for case 1.
Denoting by $[l_2,u_2]$ the corresponding confidence interval for $\rhoamn$ with asymptotic coverage probability $1-\alpha$, it follows that $l_2$ is smaller than the lower bound $l_1$ in case 1; likewise, $u_2<u_1$ for the upper bounds.
Hence, an asymptotic conservative confidence interval for $\rhoamn$ is given by
\begin{align} \label{ci-typ2}
    \left[ \max\{\hrhoamn - q_{1-\alpha/2}/\sqrt{n}, 0\},  \;
           \min\{ \hrhoamn + z_{1-\alpha/2}\, s/\sqrt{n}, 1\} \right].
\end{align}
This method is referred to as $\text{ci}_{L,l}^{(p,c)}$ and $\text{ci}_{L,l}^{(b,c)}$, depending on the covariance estimator;
$\text{ci}_{L}^{(b,c)}$ indicates the corresponding intervals for $\rhoam$.


Tables \ref{ci-tab1} and \ref{ci-tab2} show the empirical coverage probability and simulated mean length of the various confidence intervals for sample size $200$ and nominal value $\alpha=0.05$, based on $10000$ replications. Additional results for sample sizes $50$ and $800$ can be found in the Supplementary Material, see Tables \ref{supp-tab3}-\ref{supp-tab6} in Section \ref{sec-supp7}.

\begin{table}[ht]
\centering
\begin{tabular}{lrrrrrr}
  \hline
distribution & $\text{ci}_{L,l}^{(p)}$ & $\text{ci}_{L,l}^{(p,c)}$ & $\text{ci}_{L,l}^{(b)}$ & $\text{ci}_{L,l}^{(b,c)}$ & $\text{ci}_{L}^{(b)}$ & $\text{ci}_{L}^{(b,c)}$ \\
  \hline
BVN(0) & 0.85 & 0.94 & 0.86 & 0.94 & 0.86 & 0.93 \\
  BVN(0.5) & 0.94 & 0.97 & 0.94 & 0.98 & 0.94 & 0.97 \\
  BVN(0.95) & 0.94 & 0.99 & 0.94 & 0.99 & 0.97 & 0.97 \\
  MN1 & 0.90 & 0.92 & 0.91 & 0.92 & 0.90 & 0.91 \\
  MN2 & 0.95 & 0.96 & 0.95 & 0.97 & 0.95 & 0.96 \\
  MN3 & 0.94 & 0.97 & 0.94 & 0.97 & 0.94 & 0.97 \\
  MN & 0.90 & 0.95 & 0.90 & 0.95 & 0.86 & 0.94 \\ \hline
  BVT5(0) & - & - & - & - & 0.89 & 0.90 \\
  BVT2(0) & - & - & - & - & 0.86 & 0.88 \\
  BVT1(0) & - & - & - & - & 0.89 & 0.90 \\
  BVT5(0.2) & - & - & - & - & 0.95 & 0.96 \\
  BVT2(0.2) & - & - & - & - & 0.87 & 0.88 \\
  BVT1(0.2) & - & - & - & - & 0.89 & 0.90 \\ \hline
  UnifDisc & 0.95 & 0.97 & 0.96 & 0.97 & 0.96 & 1.00 \\
  UnifRhomb & 0.96 & 0.97 & 0.96 & 0.98 & 0.95 & 1.00 \\
  UnifTriangle & 0.94 & 0.97 & 0.94 & 0.97 & 0.94 & 0.97 \\ \hline
  RegLin1 & 0.94 & 0.98 & 0.94 & 0.99 & 0.95 & 0.98 \\
  RegLin2 & 0.94 & 0.98 & 0.94 & 0.98 & 0.94 & 0.98 \\
  RegQuad1 & 0.95 & 0.98 & 0.96 & 0.98 & 0.93 & 0.98 \\
  RegQuad2 & 0.95 & 0.98 & 0.96 & 0.98 & 0.92 & 0.96 \\
  RegTrig1 & 0.95 & 0.97 & 0.95 & 0.97 & 0.94 & 0.96 \\
  RegTrig2 & 0.94 & 0.97 & 0.95 & 0.97 & 0.94 & 0.96 \\
   \hline
\end{tabular}
\caption{Empirical coverage probability, $n=200$, $B=10000$. \label{ci-tab1}}
\end{table}

\begin{table}[ht]
\centering
\begin{tabular}{lrrrrrr}
  \hline
distribution & $\text{ci}_{L,l}^{(p)}$ & $\text{ci}_{L,l}^{(p,c)}$ & $\text{ci}_{L,l}^{(b)}$ & $\text{ci}_{L,l}^{(b,c)}$ & $\text{ci}_{L}^{(b)}$ & $\text{ci}_{L}^{(b,c)}$ \\
  \hline
BVN(0) & 0.21 & 0.21 & 0.21 & 0.21 & 0.21 & 0.21 \\
  BVN(0.5) & 0.21 & 0.28 & 0.21 & 0.28 & 0.21 & 0.27 \\
  BVN(0.95) & 0.03 & 0.06 & 0.03 & 0.06 & 0.03 & 0.08 \\
  MN1 & 0.34 & 0.35 & 0.34 & 0.36 & 0.31 & 0.33 \\
  MN2 & 0.33 & 0.35 & 0.34 & 0.36 & 0.32 & 0.34 \\
  MN3 & 0.32 & 0.35 & 0.33 & 0.35 & 0.30 & 0.33 \\
  MN & 0.21 & 0.22 & 0.21 & 0.22 & 0.21 & 0.21 \\ \hline
  BVT5(0) & - & - & - & - & 0.31 & 0.32 \\
  BVT2(0) & - & - & - & - & 0.33 & 0.36 \\
  BVT1(0) & - & - & - & - & 0.26 & 0.33 \\
  BVT5(0.2) & - & - & - & - & 0.32 & 0.34 \\
  BVT2(0.2) & - & - & - & - & 0.33 & 0.36 \\
  BVT1(0.2) & - & - & - & - & 0.25 & 0.32 \\ \hline
  UnifDisc & 0.18 & 0.21 & 0.19 & 0.21 & 0.10 & 0.15 \\
  UnifRhomb & 0.12 & 0.15 & 0.13 & 0.15 & 0.07 & 0.12 \\
  UnifTriangle & 0.19 & 0.20 & 0.19 & 0.21 & 0.22 & 0.27 \\ \hline
  RegLin1 & 0.13 & 0.19 & 0.13 & 0.20 & 0.13 & 0.21 \\
  RegLin2 & 0.19 & 0.23 & 0.19 & 0.23 & 0.19 & 0.24 \\
  RegQuad1 & 0.28 & 0.32 & 0.28 & 0.32 & 0.16 & 0.24 \\
  RegQuad2 & 0.27 & 0.30 & 0.28 & 0.30 & 0.21 & 0.26 \\
  RegTrig1 & 0.22 & 0.25 & 0.22 & 0.25 & 0.20 & 0.21 \\
  RegTrig2 & 0.23 & 0.26 & 0.23 & 0.26 & 0.22 & 0.23 \\
   \hline
\end{tabular}
\caption{Simulated mean length of confidence intervals, $n=200$, $B=10000$. \label{ci-tab2}}
\end{table}

First, we note that the three intervals $\text{ci}_{L,l}^{(p)}$, $\text{ci}_{L,l}^{(b)}$ and $\text{ci}_{L}^{(b)}$ have similar empirical coverage; the same holds for their conservative counterparts.
The empirical coverage is close to the nominal value for most distributions for the anti-conservative intervals. This does not hold for BVN(0) and MN, i.e. under independence, where the conservative intervals work much better.
Due to the slow convergence of the estimators to the theoretical values in case of the $t$-distribution, coverage is exact only for very large sample sizes.

Concerning mean length, $\text{ci}_{L,l}^{(p)}$, $\text{ci}_{L,l}^{(b)}$ and $\text{ci}_{L}^{(b)}$ again behave quite similarly. The conservative counterparts  often nearly have  the same length; but there are also some cases where they are somewhat longer. Still, to obtain adequate coverage also for distributions under independence we generally recommend the conservative intervals.

\subsection{Testing for independence} \label{sec-ind-test}

In this subsection, we compare the empirical power of several tests for independence with our new proposals. Besides tests based on $\hrpear$ and $\hrspear$, conducted as permutation tests,
we choose the permutation test based on $\hrhoamn$. The results for the test based on the same statistic assuming independence and vanishing third moment (see Theorem \ref{lem:asympnormest}) as well as using the general asymptotic distribution under independence  are very similar and, therefore, omitted.
The permutation test works as follows. 
\begin{enumerate}
\item Compute $\hrhoamn=\hrhoamn((X_1,Y_1),\ldots,(X_n,Y_n)$.
\item For $b=1,\ldots,B$:
\begin{enumerate}
\item Simulate a randomly permuted sample $(X_1^{*b},\ldots,X_n^{*b})$ from $(X_1,\ldots,X_n)$.
\item Compute $\widehat{\rho}_{L,l}^{*b}=\hrhoamn((X_1^{*b},Y_1),\ldots,(X_n^{*b},Y_n)$.
\end{enumerate}
\item Compute the p-value approximation 
$$
 p^\ast = \frac{1}{B+1} \left(1 +\sum_{b=1}^B \ind\{\widehat{\rho}_{L,l}^{*b}>\hrhoamn\} \right).
$$
\end{enumerate}
For all permutation tests, we set $B=999$.

Tests based on $\hrhoam$ using the asymptotic distribution under independence and a permutation procedure 
are denoted by $\widehat{\rho_{\text{L(a)}}}$ and $\widehat{\rho_{\text{L(p)}}}$, respectively.
Further, we use permutation tests based on $\hdcor$ using the R function
\verb+hdcor.test+, on $\tstar$, as implemented in the R function \verb+tauStarTest+, and on $\hbcor$, using the R function \verb+bcov.test+. Next, a Monte Carlo permutation test based on $\hxi$ using function \verb+xicor+ is applied. Finally, we used the R function \verb+dhsic.test+ with a Gaussian kernel in the R package \verb+dHSIC+ \citep{dHSIC:2019} to perform a permutation test based on the Hilbert Schmidt independence criterion of \citet{zbMATH05070116}.


Table \ref{test-tab1} shows the results for sample sizes $100$, based on $10000$ replications. Additional results for sample sizes $25$ and $400$ can be found in Section 6 in the Supplementary Material.
In Table \ref{test-tab1}, the following set of distributions is utilized:

\begin{table}[ht]
\centering
\setlength{\tabcolsep}{4pt}
\begin{tabular}{lrrrrrrrrrrr}
  \hline
distribution & \hrpear & \hrspear & \hrhoamn & $\widehat{\rho_{\text{L(a)}}}$ & $\widehat{\rho_{\text{L(p)}}}$ & \hrace & \hdcor & \tstar & \hxi & \hhsic & \hbcor \\
  \hline
  BVN(0) & 0.05 & 0.05 & 0.05 & 0.05 & 0.05 & 0.05 & 0.05 & 0.05 & 0.05 & 0.05 & 0.05 \\ 
  BVN(0.5) & 1.00 & 1.00 & 1.00 & 1.00 & 1.00 & 0.92 & 1.00 & 1.00 & 0.69 & 0.96 & 0.97 \\ 
  BVN(0.95) & 1.00 & 1.00 & 1.00 & 1.00 & 1.00 & 1.00 & 1.00 & 1.00 & 1.00 & 1.00 & 1.00 \\ 
  MN1 & 0.10 & 0.07 & 0.55 & 0.52 & 0.52 & 0.31 & 0.10 & 0.06 & 0.07 & 0.17 & 0.23 \\ 
  MN2 & 0.41 & 0.37 & 0.65 & 0.63 & 0.63 & 0.34 & 0.42 & 0.35 & 0.11 & 0.32 & 0.41 \\ 
  MN3 & 0.68 & 0.66 & 0.78 & 0.77 & 0.77 & 0.40 & 0.69 & 0.63 & 0.18 & 0.50 & 0.57 \\ 
  MN & 0.05 & 0.05 & 0.05 & 0.05 & 0.05 & 0.05 & 0.06 & 0.06 & 0.05 & 0.05 & 0.06 \\ 
  BVT5(0) & 0.16 & 0.06 & 0.40 & 0.40 & 0.40 & 0.30 & 0.15 & 0.05 & 0.06 & 0.11 & 0.19 \\ 
  BVT2(0) & 0.49 & 0.08 & 0.75 & 0.92 & 0.92 & 0.75 & 0.78 & 0.09 & 0.10 & 0.56 & 0.80 \\ 
  BVT1(0) & 0.76 & 0.10 & 0.90 & 1.00 & 1.00 & 0.97 & 0.99 & 0.16 & 0.22 & 0.99 & 1.00 \\ 
  BVT5(0.2) & 0.52 & 0.47 & 0.61 & 0.64 & 0.64 & 0.38 & 0.56 & 0.43 & 0.12 & 0.32 & 0.42 \\ 
  BVT2(0.2) & 0.62 & 0.43 & 0.81 & 0.95 & 0.95 & 0.80 & 0.90 & 0.45 & 0.17 & 0.73 & 0.89 \\ 
  BVT1(0.2) & 0.79 & 0.40 & 0.92 & 1.00 & 1.00 & 0.97 & 0.99 & 0.52 & 0.30 & 0.99 & 1.00 \\ 
  UnifDisc & 0.02 & 0.02 & 0.92 & 0.89 & 0.89 & 0.30 & 0.05 & 0.05 & 0.08 & 0.27 & 0.18 \\ 
  UnifDrhomb & 0.00 & 0.01 & 1.00 & 1.00 & 1.00 & 0.91 & 0.13 & 0.10 & 0.15 & 0.94 & 0.91 \\ 
  UnifTriangle & 1.00 & 1.00 & 1.00 & 1.00 & 1.00 & 0.96 & 1.00 & 1.00 & 0.79 & 1.00 & 1.00 \\ 
  GARCH(2,1) & 0.21 & 0.08 & 0.78 & 0.82 & 0.82 & 0.63 & 0.43 & 0.10 & 0.10 & 0.53 & 0.67 \\ 
  RegLin1 & 1.00 & 1.00 & 1.00 & 1.00 & 1.00 & 1.00 & 1.00 & 1.00 & 1.00 & 1.00 & 1.00 \\ 
  RegLin2 & 1.00 & 1.00 & 1.00 & 1.00 & 1.00 & 0.98 & 1.00 & 1.00 & 0.83 & 1.00 & 1.00 \\ 
  RegQuad1 & 0.11 & 0.13 & 0.96 & 0.99 & 0.99 & 1.00 & 1.00 & 1.00 & 1.00 & 1.00 & 1.00 \\ 
  RegQuad2 & 0.08 & 0.10 & 0.55 & 0.64 & 0.64 & 1.00 & 1.00 & 1.00 & 1.00 & 1.00 & 1.00 \\ 
  RegTrig1 & 0.97 & 0.96 & 0.95 & 0.94 & 0.94 & 1.00 & 1.00 & 1.00 & 1.00 & 1.00 & 1.00 \\ 
  RegTrig2 & 0.87 & 0.87 & 0.81 & 0.82 & 0.82 & 1.00 & 1.00 & 1.00 & 1.00 & 0.98 & 1.00 \\ 
   \hline
\end{tabular}
\caption{Empirical power for several tests of independence with $n=100$, $B=10000$. \label{test-tab1}}
\end{table}

\smallskip\noindent
- Bivariate normal distributions with $\rpear=0,0.5,0.95$, called BVN($\rpear$). \\
- The normal mixture distributions NM1, NM2 defined in Section \ref{sec2}, and a third mixture with $p=1/4$, called NM3, for which $\rpear$ and $\rho_{max}$ coincide, taking the value 1/4. \\
- The normal mixture $0.25(\phi_{(0,0)}+\phi_{(0,5)}+\phi_{(5,0)}+\phi_{(5,5)})$, satisfying $H_0$, is called NM. Here, $\phi_{(\mu,\nu)}$ denotes the density of an uncorrelated bivariate normal distribution with expected value $(\mu,\nu)$ and standard deviation 1 for each marginal.  \\
- Bivariate $t$-distributions $\text{BVT}_{\nu}(\rho)$ with degree of freedom $\nu=5,2,1$ and linear correlation coefficient $\rho=0$ and $0.2$.
The linear correlation $\rho$ is the entry in the defining correlation matrix, and corresponds to $\rpear$ if the latter exists, i.e. for $\nu>2$.
Contrary to $\rpear$, Spearman correlation always exists. Its value is $\rspear=0$ for $\rho=0$, irrescpective of $\nu$. If $\rho=0.2$, $\rspear=0.186,0.179,0.169$ for $\nu=5,2,1$  \citep{heinen2020}. \\
- Besides UnifDisc from Section \ref{sec2}, we use a uniform distribution on the rhombus with vertices $(-1,0),(0,1),(0,1),(0,-1)$, called UnifRhomb, where $\rpear=0$ and $\rhom=0.5$. Further, a uniform distribution on the triangle with vertices $(0,0),(0,1),(1,0)$, called  UnifTriangle, where $\rhom=\rpear=0.5$ \citep{buja1990remarks}. \\
- GARCH(2,1), defined in Section \ref{sec2}. \\
- Finally, we use functional dependencies similar to Fig. 2 in \cite{chatterjee2021new}.
Data are generated as
\begin{align*}
X_i\sim U(a,b), \quad Y_i=f(X_i)+\epsilon_i, \text{ with } \epsilon_i\sim N(0,\sigma^2),
\end{align*}
for $i=1,\ldots,n,$ where $X_i$ and $\epsilon_i$ are independent. \\
Linear regression: $X_i\sim U(0,1), \ f(x)=x$. For RegLin1, we use $\sigma=0.3$, which results in $\rpear=0.69$. RegLin2 has $\sigma=0.45$ and $\rpear= 0.54$. \\
Quadratic regression: $X_i\sim U(-1,1), \ f(x)=x^2$. Here, $\rpear=\rspear=0$.
For RegQuad1 and RegQuad2, $\sigma=0.15$ and $\sigma=0.3$, respectively. \\
Trigonometric regression: $X_i\sim U(0,4\pi), \ f(x)=(\sin(x)+1)/2$.
For RegTrig1 and RegTrig2, $\sigma=0.15$ and $\sigma=0.3$.

\smallskip\noindent
The general finding is that all tests keep the theoretical level closely, even with small sample sizes. The tests based on $\hrhoamn$ and $\hrhoam$ have higher power against the alternatives MN1, MN2, MN3, BVT5(0), BVT5(0.2), UnifDisc, UnifRhomb and GARCH(2,1) than all competitors. For other multivariate $t$-distributions, the test based on $\hrhoam$ has the highest power, followed by $\hdcor$ and $\hbcor$, but $t^*$ and $\widehat{\xi}$ based tests have low power. Interestingly, the permutation test based on $\hrace$ has significantly less power than the test based on $\hrhoam$ for all these alternatives.
This behaviour remains similar even for the rather large sample size of 400, as can be seen in Table \ref{supp-tab2} in the supplement. 
In contrast, the newly proposed tests (as well as $\hrpear$ and $\hrspear$) have generally lower power than the other tests for the regression type alternatives RegQuad and RegTrig. For sample size $n=25$, while the power of all tests is rather low against several alternatives, the distance correlation-based test works as well as the tests based on our new correlation measures. Against the linear regression alternative RegLin, the $\hrace$ based test has the lowest power. Summing up, our new tests are sensitive against various types of deviations from independence, in the simulation scenarios considered even more so than distance correlation-based tests. In contrast, the tests based on $t^*$ and $\widehat{\xi}$ focus strongly on but seem limited to functional relationships.

%
\section{Data examples}\label{sec:dataex}
\subsection{Salaries for Professors} \label{sec-ex1}

\cite{fox2019}, Sec. 10.7.1, analyze the 2008-09 nine-month academic salary for Assistant Professors, Associate Professors and Professors in a college in the U.S., depending on several covariates.
The data are available as dataset \emph{Salaries} in the R package \emph{carData} \citep{fox2022}. Here, we consider the variables \emph{salaries} and \emph{year of service} in the subgroup of male professors with sample size $n=248$, grouped in discipline A ("theoretical", $n=123$) and discipline B ("applied", $n=125$). Adding the subgroup of female professors ($n=18$) yields nearly identical results. The subgroups of assistant and associate professors are clearly separated with only a few years of service.

Figure \ref{fig:ex1} shows scatterplots of salary vs. years of service for the original data, the data after applying the rank-based inverse normal (RIN) transformation, and for the squared data after RIN transformation.
The values of the various dependence measures used in Sec. \ref{sec2} are given in Table \ref{tab:ex1}.
The $p$-values of the different tests of dependence used in Section \ref{sec4} are shown in lines 2-4 of Table \ref{tab:ex1-2}; here,  the number of permutations was set to 9999 for all permutation tests.

\begin{figure}
    \centering
    \includegraphics[width=\textwidth]{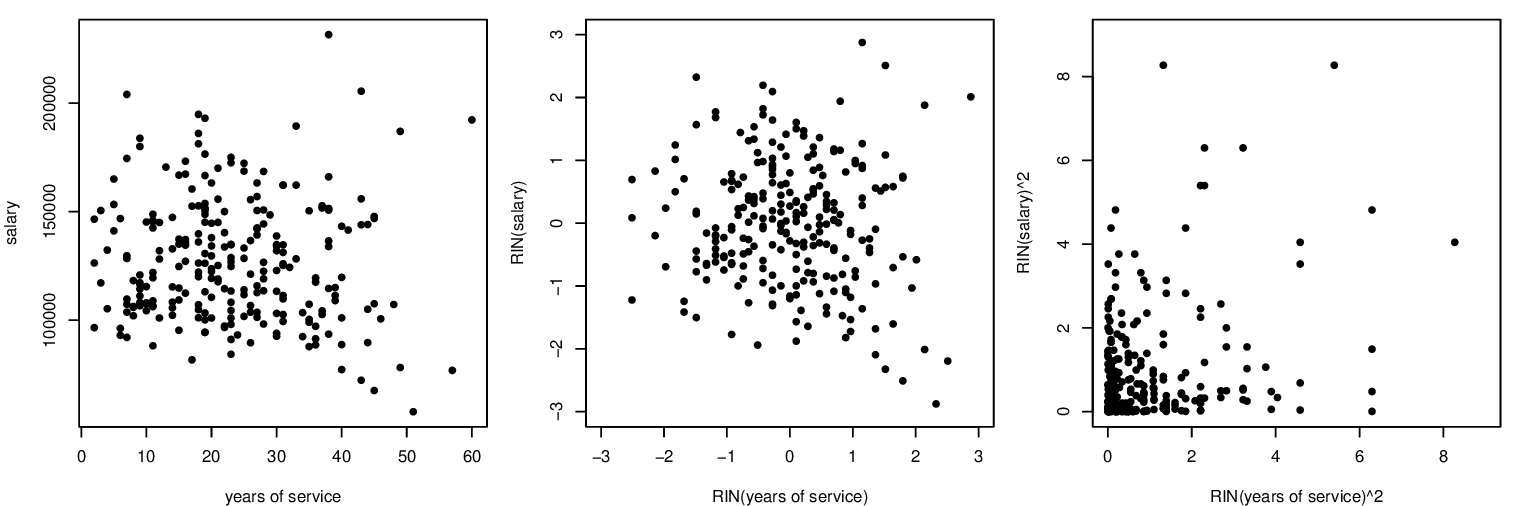}
    \caption{Scatterplots of salaries against years of service. Left: Original data. Middle: RIN transformed data. Right: RIN transformed and squared data.}
    \label{fig:ex1}
\end{figure}

\begin{table}
\centering
\begin{tabular}{c|ccccccccc}
& \hrpear & \hrspear & \hrhoamn & \hrhoam & \hrace & \hdcor & \tstar & \hxi & \hbcor \\ \hline
both         &  -0.07 &   -0.08 & 0.29 & 0.28  & 0.46  & 0.15 & 0.00 & 0.03 & 0.01 \\
discipline A & -0.19  &  -0.15  & 0.23 & 0.30  & 0.58  & 0.21 & 0.01 & 0.04 & 0.01 \\
discipline B &  0.13  &   0.06  & 0.29 & 0.21  & 0.43  & 0.19 & 0.00 &-0.02 & 0.01 \\
\end{tabular}
\caption{Values of dependence measures for datasets in Sec. \ref{sec-ex1}.}
\label{tab:ex1}
\end{table}

{\bfseries Findings:} The values of $\hrhoamn, \hrhoam$ and $\hrace$ are clearly different from zero, with $\hrace$ taking values around 0.5 for the combined data as well as for the two disciplines. 
The values of $\hrhoamn$ and $\hrhoam$ are between 0.2 and 0.3, with $p$-values close to zero for the combined data and for discipline A, and below 0.05 for discipline B. Thus, they are very similar to the p-values for the $\hrace$-based test.
Apart from the $p$-values 0.039 and 0.035 of $\hrpear$ and $\hbcor$ in subgroup discipline A and the p-value 0.022 of $\hbcor$ for the combined dataset, all remaining $p$-values are greater than 0.05.
In particular, $t^*,\hxi$ and $\hhsic$ do not indicate any kind of dependence.
Furthermore, the estimates for $\hbcor$ are practically 0, as is the case for $t^*$ and $\hxi$, which makes the test results for $\hbcor$ somewhat questionable.

\begin{table}
\centering
\setlength{\tabcolsep}{3pt}
\begin{tabular}{c|ccccccccccc}
& \hrpear & \hrspear & \hrhoamn & $\widehat{\rho_{\text{L(a)}}}$ & $\widehat{\rho_{\text{L(p)}}}$ & \hrace &\hdcor & \tstar & \hxi & \hhsic & \hbcor \\ \hline
both         & 0.269 & 0.215 & 0.001 & 0.000 & 0.000 & 0.000 & 0.057 & 0.097 & 0.254 & 0.073 & 0.022 \\
discipline A & 0.039 & 0.105 & 0.027 & 0.002 & 0.004 & 0.002 & 0.052 & 0.076 & 0.237 & 0.053 & 0.035 \\
discipline B & 0.145 & 0.485 & 0.010 & 0.042 & 0.042 & 0.024 & 0.145 & 0.391 & 0.718 & 0.491 &  0.204  \\ \hline\hline
BTC2         & 0.354 & 0.946 & 0.001 & 0.000 & 0.001 & 0.001 & 0.001 & 0.001 & 0.024 & 0.001 & 0.001 \\
BTC3         & 0.579 & 0.838 & 0.142 & 0.001 & 0.002 & 0.010 & 0.012 & 0.155 & 0.027 & 0.023 & 0.006 \\
\end{tabular}
\caption{p-values of tests of dependence for data sets in Sec. \ref{sec-ex1} and \ref{sec-ex2}.}
\label{tab:ex1-2}
\setlength{\tabcolsep}{12pt}
\end{table}

\subsection{Log returns of crypto currencies} \label{sec-ex2}

In this section, we consider log-returns of three cryptocurrencies, namely Bitcoin (BTC), Ethereum (ETH) and Ripple (XRP).
According to coinmarketcap.com, BTC and ETH are the cryptocurrencies with the highest and second-highest market capitalization; XRP is number seven on the list.
We use hourly observations from May 16, 2018, to October 27, 2021, with a sample size of 30264, which corresponds to 1261 days. All prices are closing values in U.S. dollars of the Bitstamp Exchange obtained from CryptoDataDownload under the URL \url{https://www.cryptodatadownload.com/data/bitstamp}. 
The datasets are named \verb+Bitstamp_XXXUSD_1h.csv+, where XXX is one of BTC, ETH, XRP.
Returns are estimated by taking logarithmic differences.
These datasets have been considered by \cite{holzmann2023} in a forecasting context. Here, however, we are interested in a possible dependence of the pairs $(r_t,r_{t-1})$ similar to the simulated GARCH dataset in Sec. \ref{sec2}.

Figure \ref{fig:ex2} shows scatterplots of the pairs $(r_t,r_{t-1})$ for the three cryptocurrencies; values of the various dependence measures are given in lines 2-4 of Table \ref{tab:ex2}. Note that $\hdcor$ and $\tstar$ could not be computed due to the rather large sample size of $30264$.

\begin{figure}
  \centering
  \includegraphics[width=0.9\textwidth]{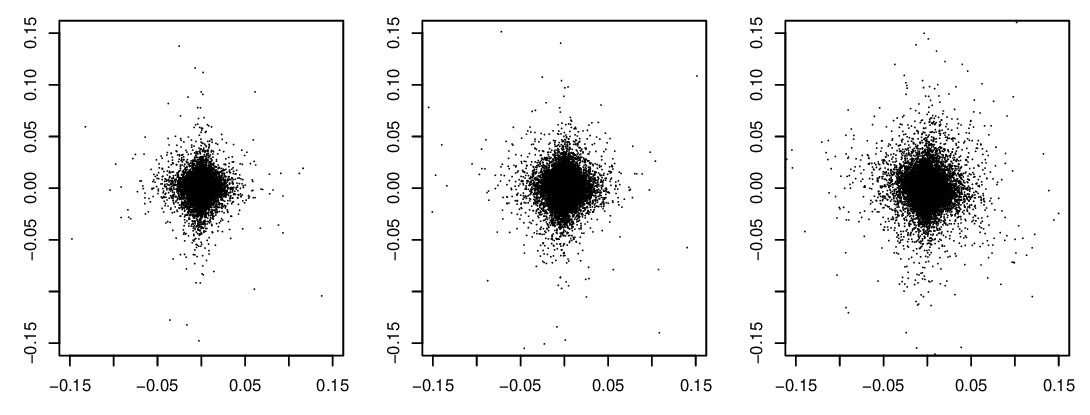}
  \caption{Scatterplots of the pairs of log returns $(r_t,r_{t-1})$ for the three cryptocurrencies. Left: Bitcoin (BTC), Middle: Ethereum (ETH), Right: Ripple (XRP).} \label{fig:ex2}
\end{figure}

\begin{table}
\centering
\begin{tabular}{c|ccccccccc}
 & \hrpear & \hrspear & \hrhoamn & \hrhoam & \hrace & \hdcor & \tstar & \hxi & \hbcor \\ \hline
BTC & -0.03 &  -0.07 &  0.22  & 0.29  & 0.29 &    - &    - & 0.01 & 0.01 \\
ETH & -0.01 &  -0.06 &  0.18  & 0.26  & 0.26 &    - &    - & 0.01 & 0.01 \\
XRP & -0.06 &  -0.08 &  0.18  & 0.36  & 0.35 &    - &    - & 0.02 & 0.01 \\ \hline
BTC2& -0.01 &   0.00 &  0.13  & 0.20  & 0.24 & 0.08 & 0.00 & 0.02 & 0.00 \\
BTC3& -0.02 &   0.01 &  0.06  & 0.13  & 0.23 & 0.11 & 0.00 & 0.04 & 0.00 \\ 
\end{tabular}
\caption{Values of dependence measures for dataset in Sec.~\ref{sec-ex2}}
\label{tab:ex2}
\end{table}


{\bfseries Findings:} As expected, the values of $\hrpear$ and $\hrspear$ are close to zero, and the same holds for $\hxi$ and $\hbcor$. The values of $\hrhoamn, \hrhoam$ and $\hrace$ are between 0.18 and 0.36, with $\hrhoam$ and $\hrace$ being nearly identical for all three datasets.
Since the datasets are heavy-tailed, the results for $\hrpear$ and $\hrhoamn$ should be judged with caution.

We also considered subsets of these datasets. As two examples, BTC2 and BTC3 contain the first six months and the first month of 2021 of the Bitcoin data with sample sizes of 4344 and 744, respectively.
The last two lines of Table \ref{tab:ex2} contain the values of the dependence measures for these datasets, which are similar to the results for the full dataset. Here,  $\hdcor$ has values comparable to $\hrhoamn$, and $\tstar$ takes values close to 0, as is the case for $\hxi$ and $\hbcor$. The  $p$-values of the corresponding tests are listed in the last two lines of Table \ref{tab:ex1-2}. 
For BTC2, all tests are significant at the 0.01 level except those based on $\hrpear, \hrspear$ and $\hxi$, the latter being significant at the 0.05 level.
For BTC3, only the $p$-values of $\widehat{\rho_{\text{L(a)}}}, \widehat{\rho_{\text{L(p)}}}, \hrace$ and $\hbcor$ are significant at the 0.01 level. Again, the test results for $\hbcor$ are difficult to interpret since $\hbcor$ is close to 0.
%
%



\section{Concluding remarks}\label{sec:conclude}

One current major strand in the literature on correlation coefficients focuses on defining notions of correlation for random variables with values in possible distinct high-dimensional \citep{szekely2007measuring, Zhu:2021} or infinite-dimensional spaces \citep{pan:2020} or even abstract metric spaces \citep{nies2021transport}. Another strand of the literature considers correlation coefficients designed to detect regression-type dependence \citep{dette2013copula, chatterjee2021new}. 

Correlation coefficients may serve as mere tools for independence tests, say in combination with permutation methods. However, when the actual estimated values of these coefficients are supposed to have some appeal for practitioners, in our assessment one should not neglect their performance in the most classical situation, that of a bivariate normal. Our coefficients and the general maximum correlation equal the Pearson correlation for the normal distribution, and the distance correlation has slightly distinct but reasonably similar values. However, as we demonstrate in Section \ref{sec2} some recently proposed correlation coefficients such as the $\tau^*$ coefficient of \cite{bergsma2014consistent}, the $\xi$-coefficient of \cite{chatterjee2021new} or the ball correlation coefficient of \cite{pan:2020} may have very low values for correlated Gaussians in the order $1/10$ or even $1/100$ of that of the actual Pearson correlation, which makes their practical use beyond independence tests somewhat problematic. 
Except for regression-type dependence, our measures have higher power against various alternatives than several competing methods, including maximum correlation when using the ACE algorithm.  

For two jointly normally distributed random vectors $X$ and $Y$ of possibly distinct dimensions, maximum correlation equals the maximal canonical correlation \citep{zbMATH03173142}
$$ \rho_{\text{can}}(X,Y) = \max_{a,b}\rpear(a^\top X, b^\top Y) = \rpear(u^\top X, v^\top Y)$$
where $a$ and $b$ vary through the Euclidean spaces of appropriate dimensions, and $(u^\top X, v^\top Y)$ is the first pair of canonical variables. Various possibilities to extend the Lancaster correlation coefficient to multiple dimensions now present themselves, the simplest being by applying the univariate Lancaster coefficients to the canonical variables resulting in $\rhoam(u^\top X,v^\top Y)$ or in  $\rhoamn(u^\top X,v^\top Y)$. A further investigation of multivariate extensions of our methods is intended in a subsequent work. 


\section*{Supplementary material} \label{SM}
Supplementary material  
includes
the proof of Proposition \ref{lem:asympnormest} and Lemmas \ref{lem:maxnormal} and \ref{lem:maxnormal1} (below) along with additional simulation results.


\part*{Appendix}
\appendix

\section*{Asymptotic covariance matrix in Proposition \ref{lem:asympnormest}}




\begin{sloppypar}

We specify the components $\sigma_m$ and $M$ in the representation $\Sigma^* = M \, \Sigma_m \, M^\top $ of the asymptotic covariance matrix in Proposition \ref{lem:asympnormest}. The derivation using the $\Delta$ - method is deferred to the appendix.  

$\boldsymbol{\Sigma_m}$: Recall the notation $\breve X = (X - \Ex[X])/\text{sd}(X)$ and similarly for $\breve Y$. 
Let 
$$c_{kl,sr} = \Cov(\breve X^k \breve Y^l, \breve X^s \breve Y^r).$$ 
Then $\Sigma_m = (c_{kl,sr})$ for $kl$ and $sr$ ranging through  $\{ 10,01,20,02,11,30,03,$ $21,12,40,04,22\}$.
\end{sloppypar}

$\boldsymbol{M}$: We shall represent $M = B\, A$ and specify the matrices $A$ and $B$. Set 
$$e_{kl} = \Ex[\breve X^k \breve Y^l],$$ then
%
\begin{align*}
	A & = \left(\begin{array}{cccccccccccc}
	0 & 0 & 1 & 0 & 0 & 0 & 0 & 0 & 0 & 0 & 0 & 0 \\
	0 & 0 & 0 & 1 &  0 & 0 & 0 & 0 & 0 & 0 & 0 & 0 \\
	0 & 0 & 0 & 0 &  1 & 0 & 0 & 0 & 0 & 0 & 0 & 0 \\
	- 4 e_{30}&  0 & - 2 e_{40} & 0 & 0 & 0 & 0 & 0 & 0 & 1 & 0 & 0 \\
	0 &  - 4 e_{03}&  0 & - 2 e_{04} & 0 & 0 & 0 & 0 & 0 & 0 & 1  & 0 \\
	- 2 e_{12} &  - 2 e_{21}&  -  e_{22} & - e_{22} & 0 & 0 & 0 & 0 & 0 & 0 & 0  & 1 \\
\end{array}\right),\\
B & = 	\left(\begin{array}{cccccc}
	- \rhono/2 & - \rhono/2 & 1 & 0 & 0 & 0  \\
	0 & 0 & 0 & - \frac{\rhont}{2\, (e_{40}-1)}  &  - \frac{\rhont}{2\, (e_{04}-1)}  & \frac{1}{\big( (e_{40} - 1)\,(e_{04} - 1)\big)^{1/2}}    \\
\end{array}\right)
\end{align*}
%
Explicitly one obtains
\begin{align*}
	\Sigma^*_{1,1} &= (e_{40}+e_{04}+2e_{22})\rho^2/4 - (e_{31}+e_{13})\rho+e_{22},
\end{align*}
but rather complicated expressions for $\Sigma^*_{1,2}$ and $\Sigma^*_{2,2}$.

\section*{Proof of Proposition \ref{lem:jointasympnormalrankgeneral}}

\begin{proof}[Proof of Proposition \ref{lem:jointasympnormalrankgeneral}]

\textit{Asymptotic distribution under independence}

Let $R^\circ(j)$ denote the anti-ranks of the $Y_j$ given sorted $X_j$'s \citep[Section 13.3]{van2000asymptotic}, so that
\begin{align}\label{eq:defsn}
	S_n : = n\,  s_a^2 \, \hrhoro = \sum_{j=1}^n a(Q(j))a(R(j))
	&= \sum_{j=1}^n a(j) a(R^\circ(j)).
\end{align}
If  $X$ and $Y$ are independent, $S_n$ is distributed as the linear rank statistic
$$S_n' = \sum_{j=1}^n a(j) a(R(j)).$$
For the scores, we have that
$$s_a^2 = \frac1n\, \sum_{j=1}^n \, \Big(\Phi^{-1}\big(\frac{j}{n+1}\big)\Big)^2 \to \int_0^{1} (\Phi^{-1}(u))^2 \dd u = 1,$$
and furthermore $\Var(S_n')=\frac{n^2}{n-1}s_a^4$. Therefore, under independence from \citet[Theorem 13.5]{van2000asymptotic}
\begin{equation}\label{eq:expandoneind}
	\sqrt n \, \hrhoro \stackrel{\cL}{=} \frac{1}{\sqrt n\, s_a^2}\, S_n' = \frac{1}{\sqrt n\, s_a^2}\, \sum_{j=1}^n \, a(j)\, \Phi^{-1}(U_j) + o_\PP(1),
\end{equation}
where $U_j = F_Y(Y_j)$ are independent and uniformly distributed.

Similarly,
\begin{align}\label{eq:deftn}
	T_n : = n\,  s_b^2 \, \hrhort = \sum_{j=1}^n b(Q(j))b(R(j)) - n\  \bar b^2 &= \sum_{j=1}^n b(j) b(R^\circ(j)) - n\  \bar b^2,
\end{align}
which, if  $X$ and $Y$ are independent, is distributed as the linear rank statistic
$$T_n' = \sum_{j=1}^n b(j) b(R(j)) - n\  \bar b^2.$$
For the scores, we have that
$$\frac1n\, \sum_{j=1}^n \, \Big(\Phi^{-1}\big(\frac{j}{n+1}\big)\Big)^4 \to \int_0^{1} (\Phi^{-1}(u))^4 \dd u = 3,$$
so that
$ s_b^2 \to 2.$
Since $\Var(T_n')=\frac{n^2}{n-1}s_b^4$, we obtain that \citep[Theorem 13.5]{van2000asymptotic}
\begin{equation}\label{eq:expandtwoind}
	\sqrt n \, \hrhort \stackrel{\cL}{=} \frac{1}{\sqrt n\, s_b^2}\, T_n' = \frac{1}{\sqrt n\, s_b^2}\, \sum_{j=1}^n \, \big( b(j) - \bar b\big)\, \big(\Phi^{-1}(U_j)\big)^2 + o_\PP(1),
\end{equation}
with $U_j$ as defined above. To conclude with the joint asymptotic normal distribution in \eqref{eq:expandoneind} and \eqref{eq:expandtwoind}, first note that the variance of the linear part in \eqref{eq:expandoneind} is $1$, while in \eqref{eq:expandtwoind} it converges to $1$, and the covariance is $0$ since $\Ex[(\Phi^{-1}(U_j))^3] = 0$. For joint asymptotic normality, one may refer to the next part of the proof (without assuming independence). In a direct argument, using the Cramér-Wold device one needs to check the Lindeberg condition for the variables
$$ \lambda \, a(j)\, \Phi^{-1}(U_j) + \mu \, \big( b(j) - \bar b\big)\, \big(\Phi^{-1}(U_j)\big)^2,\qquad \lambda, \mu \in \R, j=1, 2, \ldots,$$
where $\mu\not=0$ or $\lambda \not=0$.  This is implied by the Noether condition
\begin{align*}
	&\frac{\lambda^2\, \max_{j=1}^n a^2(j) + \mu^2\,  \max_{j=1}^n b^2(j)}{\lambda^2\, \sum_{j=1}^n a^2(j) + \mu^2\, \sum_{j=1}^n \big( b(j) - \bar b\big)^2} \\
	&  \hspace*{2cm} = \frac{\lambda^2\,\big(\Phi^{-1}(n/(n+1))\big)^2 + \mu^2 \,\big(\Phi^{-1}(n/(n+1))\big)^2}{\lambda^2\, n\, s_a^2 + \mu^2\, n\, s_b^2} \ \to \ 0.
\end{align*}
This follows from $s_a^2 \to 1$, $s_b^2 \to 2$ and from the quantile bound
%

%
%
%
\begin{equation}\label{eq:logboundquantile}
	\Phi^{-1}(n/(n+1)) \leq \sqrt{2\, \log((n+1)/2)},
\end{equation}
which follows from the bound $1-\Phi(z)\leq \exp(-z^2/2)/2$, $z \geq 0$.
%
%

%
%
%

\vspace{0.5cm}

\textit{Asymptotic distribution in general}

%
We use \citet[Proof of Theorem 2.1]{ruymgaart1972asymptotic} to derive the joint asymptotic distribution of $(\hrhoro, \hrhort)$ in the general setting. 	

Let $C(u,v)$ denote the copula of $(X,Y)$ and observe that
$$\rhoro = \int_{(0,1)^2} \Phi^{-1}(u)\, \Phi^{-1}(v) \,  \dd C(u,v).$$			
%
%
After checking \citet[Assumption 2.1 and 2.2]{ruymgaart1972asymptotic} (see below), setting
$$(U_i,V_i) = (F_X(X_i),F_Y(Y_i)), \qquad i=1, \ldots, n,$$
from \citet[(3.9) and p.~1126]{ruymgaart1972asymptotic} we obtain the following asymptotic expansion for $S_n$ in \eqref{eq:defsn}:
\begin{align*}
	\sqrt n \, \big(\frac1n S_n - \rhoro \big) & = \frac{1}{\sqrt n}\, \sum_{i=1}^n \Big(\Phi^{-1}(U_i)\, \Phi^{-1}(V_i) - \rhoro \nonumber \\
	& \qquad \quad + \int_{(0,1)^2} \big(1([U_i, \infty))(u) - u\big) (\Phi^{-1})'(u)\, \Phi^{-1}(v) \, \dd C(u,v) \nonumber \\
	& \qquad \quad + \int_{(0,1)^2} \big(1([V_i, \infty))(v) - v\big) \, \Phi^{-1}(u) \,(\Phi^{-1})'(v) \,  \dd C(u,v) \Big)  + \, o_\PP(1) \nonumber \\
	& =  A_{n,1} + \, o_\PP(1),
\end{align*}		
and therefore
\begin{align}\label{eq:rank1asymp}
 \sqrt n \big(\hrhoro - \rhoro \big) & = \frac{\sqrt n}{s_a^2} \, \big(\frac1n S_n - \rhoro \big) \, + \rhoro \, \sqrt n \,  \frac{1-s_a^2 }{s_a^2} \nonumber\\
	& = \frac{1}{s_a^2}\, A_{n,1} + \, o_\PP(1),
\end{align}
by using Lemma \ref{lem:riemansumapprox} below on the second term.
Concerning 	\citet[Assumption 2.1 and 2.2]{ruymgaart1972asymptotic}, for $\Phi^{-1}$ this is immediate from the logarithmic bound (see \eqref{eq:logboundquantile}) 
\begin{equation}\label{eq:boundquantile}
	\Phi^{-1}(u) \leq \Big(2\, \log\big(1/(2(1-u)) \big) \Big)^{1/2}, \qquad u \in (1/2,1).
\end{equation}


For the derivative
$ (\Phi^{-1})'(u) = \frac{1}{\varphi(\Phi^{-1}(u))}$
we use the bound
\begin{equation}\label{eq:boundderivative}
	\varphi(\Phi^{-1}(u)) \geq \frac{\sqrt 2}{\sqrt \pi} \min(u, (1-u)), \qquad u \in (0,1).
\end{equation}
By symmetry, it suffices to check this for $u \in (0,1/2]$. Both sides equal $1/\sqrt{2 \pi}$ at $1/2$ and tend to $0$ as $u \to 0$. Therefore, it suffices to show that the left side is concave. But for the second derivative,
$$ \big(\frac{\dd^2}{\dd u^2} \varphi\circ \Phi^{-1}\big)(u) = \big(\frac{\dd^2}{\dd u^2} \log \varphi \big) (\Phi^{-1}(u)),$$
and concavity follows from the log-concavity of the normal density.


Similarly, denoting $\Phi^{-2}(u) : = (\Phi^{-1}(u))^2$ we have that
$$\rhort   = \big(\int_{(0,1)^2} \Phi^{-2}(u)\, \Phi^{-2}(v) \,  \dd C(u,v) - 1\big)/2 .$$			

After checking \citet[Assumption 2.1 and 2.2]{ruymgaart1972asymptotic}, now for $\Phi^{-2}(u)$, see below, for $T_n$ in \eqref{eq:deftn} we obtain the asymptotic expansion
%
%
\begin{align}\label{eq:rank1asymp1}
	& \sqrt n \, \big(\frac1n T_n + \bar b^2 - (2\, \rhort + 1) \big)\nonumber\\
	  = & \frac{1}{\sqrt n}\, \sum_{i=1}^n \Big(\Phi^{-2}(U_i)\, \Phi^{-2}(V_i) - (2\, \rhort + 1) \nonumber \\
	& \qquad \qquad + \int_{(0,1)^2} \big(1([U_i, \infty))(u) - u\big) (\Phi^{-2})'(u)\, \Phi^{-2}(v) \, \dd C(u,v) \nonumber \\
	& \qquad \qquad + \int_{(0,1)^2} \big(1([V_i, \infty))(v) - v\big) \, \Phi^{-2}(u) \,(\Phi^{-2})'(v) \,  \dd C(u,v) \Big)  + \, o_\PP(1) \nonumber \\
	 =  & A_{n,2} + \, o_\PP(1).
\end{align}		
Setting
$$\rranksn = \Big( \int_{(0,1)^2} \Phi^{-2}(u)\, \Phi^{-2}(v) \,  \dd C(u,v) - \bar b^2 \Big)/s_b^2,$$			
we obtain
\begin{align}
 \sqrt n \big(\hrhort - \rhort \big) & = \sqrt n \big(\hrhort - \rranksn \big) + \sqrt n \big(\rranksn - \rhort \big) \nonumber\\
 		& = 	\frac{1}{s_b^2}\, A_{n,2} + \, o_\PP(1) + o(1),\label{eq:saympexprankgen2}
\end{align}
by using Lemma \ref{lem:riemansumapprox}.
Asymptotic normality follows from \eqref{eq:rank1asymp} and \eqref{eq:saympexprankgen2} and the multivariate Lindeberg-L\'evy theorem.
%
%
%

%
%
%

Concerning 	\citet[Assumption 2.1 and 2.2]{ruymgaart1972asymptotic} for $\Phi^{-2}$, from \eqref{eq:boundquantile} we obtain

\begin{equation*}
	\Phi^{-2}(u) \leq \, 2\, \log\big(1/(2(1-u)) \big) , \qquad u \in (1/2,1).
\end{equation*}
Further, since
$$ \big(\Phi^{-2} \big)'(u)	= 2\,  \varphi(\Phi^{-1}(u))\, \Phi^{-1}(u)	$$
we can combine  \eqref{eq:boundderivative} and \eqref{eq:boundquantile} which takes care of the required assumptions.
\end{proof}
\begin{lemma}\label{lem:riemansumapprox}
	We have the following error bounds in the Riemann sum approximations:
	\begin{equation*}
		s_a^2 - 1 = \bar b - 1 = O\big((\log n)^{3/2} /n \big),\qquad   s_b^2 - 2 = O\big((\log n)^{5/2} /n \big).
	\end{equation*}
\end{lemma}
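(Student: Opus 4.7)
The plan is to view $\bar b$ and $n^{-1}\sum_j b(j)^2$ as right-endpoint Riemann-sum approximations to the Gaussian moments $\Ex[Z^2]=1$ and $\Ex[Z^4]=3$, to bound the approximation errors via an integration-by-parts identity on each subinterval, and to exploit the symmetry $\Phi^{-1}(1-u)=-\Phi^{-1}(u)$ to handle the endpoint singularities of the integrands.

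First I would pair $j$ with $n+1-j$ in the defining sum of $\bar a$ (noting $\Phi^{-1}(1/2)=0$ when $n$ is odd) to conclude $\bar a = 0$ exactly, whence $s_a^2 = n^{-1}\sum_j a(j)^2 = \bar b$, which reduces the first claim to a bound on $\bar b - 1$. With $h=1/(n+1)$, $x_j = j/(n+1)$, $g(u)=(\Phi^{-1}(u))^2$, we have $\bar b = \tfrac{n+1}{n}\,h\sum_{j=1}^n g(x_j)$ and $\int_0^1 g(u)\,du = 1$. The basic building block, obtained by integration by parts, is
\[
h\,g(x_j) - \int_{x_{j-1}}^{x_j} g(u)\,du \;=\; \int_{x_{j-1}}^{x_j} (u-x_{j-1})\,g'(u)\,du,
\]
bounded in absolute value by $h \int_{x_{j-1}}^{x_j}|g'(u)|\,du$.

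Since $\int_0^1 |g'|=\infty$, I would use the symmetry $g(1-u)=g(u)$ to reduce the analysis to the right half of $(0,1)$. On $[1/2,x_n]$ the function $g$ is monotonically increasing, so summing the building block over $j$ with $x_{j-1}\ge 1/2$ telescopes to
\[
\Bigl|\,h\!\!\sum_{x_{j-1}\ge 1/2}\!g(x_j) - \int_{x_{j_0-1}}^{x_n} g(u)\,du\Bigr| \;\leq\; h\bigl[g(x_n)-g(x_{j_0-1})\bigr] \;=\; O(h\log n),
\]
using $g(x_n)=(\Phi^{-1}(n/(n+1)))^2 \le 2\log((n+1)/2)$ from the Gaussian tail bound $1-\Phi(z)\le e^{-z^2/2}/2$ recalled in the main text. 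The excised tail $\int_{x_n}^1 g(u)\,du \le 2\int_0^{1/(n+1)}\log(1/(2v))\,dv = O(\log n / n)$ and the $O(h)$ endpoint gap between $x_{j_0-1}$ and $1/2$ (negligible since $g = O(1)$ there) contribute only $O(\log n/n)$, so doubling by symmetry gives $\bar b - 1 = O(\log n / n)$, which is stronger than claimed.

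The bound on $s_b^2 - 2$ follows identically with $g$ replaced by $\tilde g(u)=(\Phi^{-1}(u))^4$: I use $\int_0^1 \tilde g = \Ex[Z^4]=3$ and the analogous monotonicity estimate $\int_{1/2}^{x_n}\tilde g'(u)\,du = \tilde g(x_n)-\tilde g(1/2) \le (2\log((n+1)/2))^2=O((\log n)^2)$, together with the tail bound $\int_{x_n}^1 \tilde g = O((\log n)^2/n)$ from the same quadratic-log integration. This yields $n^{-1}\sum_j b(j)^2 - 3 = O((\log n)^2/n)$, and subtracting $\bar b^2 = 1 + O(\log n/n)$ gives $s_b^2 - 2 = O((\log n)^2/n)$, again better than the stated rate. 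The main obstacle is the non-integrability of $g'$ and $\tilde g'$ over $(0,1)$, which the symmetric split around $1/2$ circumvents by reducing to a one-sided (monotone) half-interval where the total variation of $g$ or $\tilde g$ is controlled explicitly by the Gaussian tail bound, and the extra logarithmic slack in the target rates comfortably absorbs all boundary corrections.
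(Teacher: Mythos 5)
Your proof is correct, and it takes a genuinely different route from the paper's. The paper bounds the cell-wise error $|g(x_j)-g(u)|$ (with $g=(\Phi^{-1})^2$) by first factoring $|\Phi^{-2}(u)-\Phi^{-2}(v)|\le 2\,\Phi^{-1}((n-1)/n)\,|\Phi^{-1}(u)-\Phi^{-1}(v)|$ and then applying the mean value theorem with the quantile-derivative bound $\varphi(\Phi^{-1}(u))\ge \sqrt{2/\pi}\,\min(u,1-u)$ of \eqref{eq:boundderivative} (itself proved via log-concavity of $\varphi$); the resulting harmonic-type sum is $O(\log n)$, and the max-quantile prefactor $O(\sqrt{\log n})$ is what produces the extra half power in the stated rates $(\log n)^{3/2}/n$ and $(\log n)^{5/2}/n$. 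Its tail term is computed exactly through the Gaussian moment identity $\int_{(n-1)/n}^1\Phi^{-2}(u)\,\dd u = a\varphi(a)+(1-\Phi(a))$. You instead use summation by parts on each cell and then telescope the total variation of $g$ (respectively $\tilde g=g^2$) over the half-interval where it is monotone, after splitting at $1/2$ by the symmetry $g(1-u)=g(u)$; this needs only the quantile upper bound \eqref{eq:logboundquantile} and elementary integrals of $\log(1/(2v))$ and $(\log(1/(2v)))^2$ for the excised tail, avoids \eqref{eq:boundderivative} entirely (though the paper needs that bound anyway to verify Ruymgaart's assumptions, so nothing is saved globally), and delivers the strictly sharper rates $O(\log n/n)$ and $O((\log n)^2/n)$, which of course imply the lemma. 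Your observation that $\bar a=0$ exactly (by pairing $j$ with $n+1-j$), so that $s_a^2=\bar b$, matches the identity implicit in the lemma's statement, and your grid $x_j=j/(n+1)$ with spacing $h=1/(n+1)$ cleanly sidesteps the node/cell mismatch ($j/n$ versus $j/(n+1)$) present in the paper's display. The only details worth spelling out in a final write-up are the ones you already flagged: the cell straddling $1/2$, on which $g$ is not monotone but is $O(h^2)$, and the $\tfrac{n+1}{n}$ normalization factor, both absorbed into $O(1/n)$.
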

\begin{proof}
	We give the proof of the first claim, the second follows similarly. We estimate
	\begin{align}
		|s_a^2 - 1| & \leq 2\, \Phi^{-1}((n-1)/n) \,\frac1n\,  \sum_{j=2}^{n-1} \int_{(j-1)/n}^{j/n}\, |\Phi^{-1}(u) - \Phi^{-1}(j/(n+1))|\, \dd u \label{eq:sumbound}\\
			& + \frac1n\, \Big(2\, \Phi^{-2}(n/(n+1)) + 2\, \int_{(n-1)/n}^1 \Phi^{-2}(u)\, \dd u\Big).\nonumber
	\end{align}
	The first term in the bracket in the second line is bounded order $\log n$ by \eqref{eq:logboundquantile}, and the factor in the first line by $\sqrt{ \log n}$. Furthermore, for a standard normally distributed random variable $N$, setting $a = \Phi^{-1}((n-1)/n)$,
	\begin{align*}
		\int_{(n-1)/n}^1 \Phi^{-2}(u)\, \dd u & = \Ex\big[N^2\, 1(N \geq \Phi^{-1}((n-1)/n) ) \big] \\
			& = a\, \varphi(a) + (1 - \Phi(a)) \leq C\, \big(\sqrt{ \log n} / n + 1/n \big),
	\end{align*}
	since the normal quantile is also lower bounded by
	$$ a = \Phi^{-1}((n-1)/n) \geq \sqrt{2\, \log((n+1)/c)}$$
	for some $c>0$ and $n$ sufficiently large. Finally, for the sum in \eqref{eq:sumbound}, using the mean value theorem and \eqref{eq:boundderivative} for the derivative of $\Phi^{-1}$ we obtain a bound of order
	$$ \sum_{j=2}^{n-1}\, \frac1n\, \frac{n}{\max(j,n-j)} = O(\log n). $$
	Combining the estimates yields the first statement.
\end{proof}


\section*{Proof of Theorem \ref{lem:asympdistrmax}}

\begin{proof} 
\begin{itemize}
\item
\begin{sloppypar}
Consider $|\rho_2| < |\rho_1|$, the other case being similar. If $\rho_1>0$ then $\max( |\rho_1|,  |\rho_2|) = \rho_1 (>0)$. By consistency $\hat \rho_1 \to \rho_1$, so that $\PP(\hat \rho_1 = \max(|\hat \rho_1|, |\hat \rho_2|)) \to 1$. Therefore \eqref{eq:finalasymptotic} has the same asymptotic normal limit $U$ as $\sqrt n \, (\hat \rho_1 - \rho_1)$. If $\rho_1<0$ then $\max(|\rho_1|, |\rho_2|) = - \rho_1 (>0)$. Again by consistency $\hat \rho_1 \to \rho_1$, we have $\PP(- \hat \rho_1 = \max(|\hat \rho|, |\hat \rho_2|)) \to 1$, and \eqref{eq:finalasymptotic} has the same asymptotic distribution $-U$ as $\sqrt n \, ( - \hat \rho + \rho)$, but $U \stackrel{\cL}{=} -U$.
\end{sloppypar}
\item
Suppose first that  $\rho_1 = \rho_2 =:\rho >0$. By consistency, $\PP(\hat \rho_1 >0, \hat \rho_2 >0) \to 1$ so that
	$ \sqrt n \big(\max(|\hat \rho_1|, |\hat \rho_2|) - \rho\big)$ and $\max\big(\sqrt n (\hat \rho_1 - \rho), \sqrt n (\hat \rho_2 - \rho) \big) $
	are asymptotically equivalent with asymptotic distribution
	$ \max (U,V)$ by the continuous mapping theorem. Similarly, we deduce the asymptotic distribution of $ \max (-U,-V)$ in case $\rho_1 = \rho_2 <0$, which is the same as that of $ \max (U,V)$ since $(U,V) \stackrel{\cL}{=} (-U,-V)$.
	If $- \rho_1 = \rho_2 =: \rho>0$ then $\PP(\hat \rho_1 <0, \hat \rho_| >0) \to 1$, and
	$ \sqrt n \big(\max(|\hat \rho|, |\hat \rho_2|) - \rho\big)$ and $\max\big(\sqrt n (- \hat \rho -  ( - \rho)), \sqrt n (\hat \rho_2 - a) \big) $ are asymptotically equivalent, with asymptotic distribution  $ \max (-U,V)$. Finally, for the case $\rho_1 = -\rho_2 >0$ we obtain  $ \max (U,-V)$, which has the same distribution as  $ \max (-U,V)$ since $(-U,V) \stackrel{\cL}{=} (U,-V)$.
	%
	
		%
	\item If $\rho_1 = \rho_2 =0$, the weak convergence of 	$ \sqrt n \max(|\hat \rho_1|, |\hat \rho_2|) \big)$ to $\max (|U|,|V|)$ is immediate from the continuous mapping theorem.
\end{itemize}

\end{proof}	

\begin{lemma}\label{lem:maxnormal1}
	Again suppose that (U,V) is distributed according to (12). 
	%
	%
	Then for $Z = \max(|U|,|V|)$ we have for $z >0$ that
	\begin{equation}\label{eq:distrmax1}
		\PP(Z \leq z) = 2\, \int_0^z\, \varphi(t;0, \sigma_1^2)\,\Big( \Phi\Big(\frac{z - \tau \, \sigma_2 \, t / \sigma_1}{\sigma_2\, \sqrt{1-\tau^2}} \Big) - \Phi\Big(\frac{-z- \tau \, \sigma_2 \, t / \sigma_1}{\sigma_2\, \sqrt{1-\tau^2}} \Big)\,\Big)\, \dd t.
	\end{equation}	
	
	The density function $f_Z$ of $Z$ is given for $z>0$ by
	\begin{align}
		f_Z(z) & =  2\, \varphi(z;0, \sigma_1^2)\,\Big( \Phi\Big(\frac{z - \tau \, \sigma_2 \, z / \sigma_1}{\sigma_2\, \sqrt{1-\tau^2}} \Big)\, - \Phi\Big(\frac{-z - \tau \, \sigma_2 \, z / \sigma_1}{\sigma_2\, \sqrt{1-\tau^2}} \Big) \Big) \nonumber\\
		& + 2\, \varphi(z;0, \sigma_2^2)\, \Big(\Phi\Big(\frac{z - \tau \, \sigma_1 \, z / \sigma_2}{\sigma_1\, \sqrt{1-\tau^2}} \Big) - \Phi\Big(\frac{-z - \tau \, \sigma_1 \, z / \sigma_2}{\sigma_1\, \sqrt{1-\tau^2}} \Big) \Big)  \label{eq:densermax1}
	\end{align}
\end{lemma}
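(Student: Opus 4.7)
The plan is to mirror the derivation of Lemma~\ref{lem:maxnormal}: compute the joint cdf by conditioning on $U$, then differentiate and use the swap identity \eqref{eq:formaludens} to bring the density into the form asserted in \eqref{eq:densermax1}.

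First I would write $\{Z\le z\}=\{-z\le U\le z,\ -z\le V\le z\}$ and condition on $U=t$. Since $\mathcal L(V\mid U=t)=\mathcal N(\tau\sigma_2 t/\sigma_1,\sigma_2^2(1-\tau^2))$, this yields
\[
\PP(Z\le z)=\int_{-z}^{z}\varphi(t;0,\sigma_1^2)\Big[\Phi\bigl(g_+(t,z)\bigr)-\Phi\bigl(g_-(t,z)\bigr)\Big]\,\dd t,
\]
with $g_\pm(t,z):=(\pm z-\tau\sigma_2 t/\sigma_1)/(\sigma_2\sqrt{1-\tau^2})$. To obtain the factor of $2$ and the reduced range $(0,z)$ asserted in \eqref{eq:distrmax1}, I would split the integral at $0$, substitute $t\mapsto -t$ in the left half, and use $\varphi(-t;0,\sigma_1^2)=\varphi(t;0,\sigma_1^2)$ together with the symmetry $\Phi(a)-\Phi(b)=\Phi(-b)-\Phi(-a)$ to see that the two halves coincide.

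For the density I would differentiate \eqref{eq:distrmax1} in $z$ by means of the Leibniz rule \eqref{eq:derivativegeneral}. The boundary contribution at $t=z$ immediately produces the first line of \eqref{eq:densermax1}. The inside derivative equals $\varphi(t;0,\sigma_1^2)\bigl[\varphi(g_+(t,z))+\varphi(g_-(t,z))\bigr]/(\sigma_2\sqrt{1-\tau^2})$, and here I would apply \eqref{eq:formaludens} once with $z$ and once with $-z$ (using $\varphi(-z;0,\sigma_2^2)=\varphi(z;0,\sigma_2^2)$) to rewrite the integrand so that $t$ enters only through a Gaussian density centered at $\pm\tau\sigma_1 z/\sigma_2$ with standard deviation $\sigma_1\sqrt{1-\tau^2}$. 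Integrating over $t\in(0,z)$ then yields four $\Phi$-differences; after applying $\Phi(a)-\Phi(b)=\Phi(-b)-\Phi(-a)$ to the branch arising from $-z$, the two terms evaluated at $t=0$ cancel and the remaining two combine to the second line of \eqref{eq:densermax1}, up to the outer factor~$2$.

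The main obstacle I foresee is purely one of sign bookkeeping: the symmetrisation in step one must match the $\pm z$ arguments correctly under $t\mapsto -t$, and the cancellation of the ``middle'' $\Phi$-terms in the density step hinges on the same flip identity combined with the evenness of $\varphi$. Once those are laid out cleanly, the argument is a direct extension of the proof of Lemma~\ref{lem:maxnormal} and uses no new tools.
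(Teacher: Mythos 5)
Your proposal is correct and takes essentially the same route as the paper's proof: the paper also obtains the factor $2$ from the central symmetry of $(U,V)$ (your $t\mapsto -t$ substitution is the integral-level version of its reduction $\PP(Z\le z)=2\,\PP(0\le U\le z,\,-z\le V\le z)$), conditions on $U=t$ to get \eqref{eq:distrmax1}, and then differentiates via the Leibniz rule \eqref{eq:derivativegeneral} combined with the swap identity \eqref{eq:formaludens} to reach \eqref{eq:densermax1}. Your sign bookkeeping also checks out: applying $\Phi(a)-\Phi(b)=\Phi(-b)-\Phi(-a)$ to the $-z$ branch makes the two $\Phi$-values at $t=0$ cancel exactly, leaving the second line of \eqref{eq:densermax1}.
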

	Again, $f_Z$ can be expressed using densities of skew-normal distributions as
	\begin{align*}
		f_Z(z) = &g\Big(z;\sigma_1, \frac{\sigma_1/\sigma_2-\tau}{\sqrt{1-\tau^2}} \Big)
		- g\Big(-z;\sigma_1, \frac{\sigma_1/\sigma_2+\tau}{\sqrt{1-\tau^2}} \Big) \\
		&+ g\Big(z;\sigma_2, \frac{\sigma_2/\sigma_1-\tau}{\sqrt{1-\tau^2}} \Big)
		- g\Big(-z;\sigma_2, \frac{\sigma_2/\sigma_1+\tau}{\sqrt{1-\tau^2}} \Big)
	\end{align*}
	for $z>0$. Hence, the distribution function of $Z$ can be readily computed using the cdf of $SN(0,\sigma,\alpha).$

\bibliographystyle{chicago}
\bibliography{biblio}

\newpage		
\part*{Supplementary material}
\setcounter{page}{1}

\section{Covariance matrix \texorpdfstring{$\mathbf \Sigma^*$}{Sigma*} in (\ref{eq:asymcovmat-bvn})
under normality}

\begin{remark} 
If $(X,Y)$ are jointly normally distributed, for the quantities $\Sigma_m$, $A$ and $B$ given in the appendix to determine $\Sigma^*$ we observe that
	$e_{kl}=0$ if $k+l$ is odd and furthermore,
\begin{align*}
	e_{31} &= 3\rho, \quad e_{51} = 15\rho, \quad
	e_{22} = 2\rho^2+1, \quad e_{42} = 3(4\rho^2+1), \\
	e_{62} &= 15(6\rho^2+1),  \quad
	e_{33} = 3\rho(2\rho^2+3), \quad
	e_{44} = 3(8\rho^4+24\rho^2+3),
\end{align*}
\citep{kotz2000}. The nonzero entries in $\Sigma$ are
\begin{align*}
    c_{10,10} &= 1, \quad c_{10,01} = \rho, \quad
	c_{10,30} = 3, \quad c_{10,03} = 3\rho, \quad
	c_{10,21} = 3\rho, \quad c_{10,12} = 2\rho^2+1, \\
	c_{20,20} &= 2, \quad c_{20,02} = 2\rho^2, \quad
	c_{20,11} = 2\rho, \quad c_{20,40} = 12, \quad 
    c_{20,22} = 10\rho^2+2, \\
	c_{11,20} &= 2\rho, \quad c_{11,11} = \rho^2+1, \quad
	c_{11,40} = 12\rho, \quad c_{11,22} = 4\rho(\rho^2+2), \\
	c_{30,10} &= 3, \quad c_{30,01} = 3\rho, \quad
	c_{30,30} = 15, \quad c_{30,03} = 3\rho(2\rho^2+3), \quad
	c_{30,21} = 15\rho, \\  
	c_{21,10} &= 3\rho, \quad c_{21,30} = 15\rho, \quad
    c_{21,03} = c_{21,21} = 3(4\rho^2+1), \quad c_{21,12} = 3\rho(2\rho^2+3), \\
	c_{40,20} &= 12, \quad c_{40,02} = 12\rho^2, \quad
	c_{40,11} = 12\rho, \quad c_{40,40} = 96, \quad
	c_{40,04} = 24\rho^2(\rho^2+3), \\ 
	c_{22,11} & = 4\rho(\rho^2+2), \quad c_{22,40} = 12(7\rho^2+1), \quad
    c_{22,22} = 20\rho^4+68\rho^2+8,
\end{align*}
together with the corresponding nonzero entries when the roles of $U$ and $V$ are reversed.
Plugging these values into $\Sigma_m$ and computing $\Sigma^* = M \, \Sigma_m \, M^\top $,
yields the matrix in (\ref{eq:asymcovmat-bvn}).
\end{remark}

\section{Proof of Proposition 1} 

\begin{proof} 
%
%
Denote the empirical moments by
\begin{equation*}
	m_{kl} = \frac1n\, \sum_{i=1}^n X_i^k\, Y_i^l,\qquad k,l \in \{0,1,\ldots, 4\}.
\end{equation*}
Further, set
\begin{equation*}
	s_{xy} = m_{11} - m_{10} m_{01},\qquad s_x^2 = m_{20} - m_{10}^2,\qquad  s_y^2 = m_{02} - m_{01}^2,
\end{equation*}
so that 
\begin{equation*}
	\hrhono = \rpear(X, Y) = \frac{s_{xy}}{s_x\, s_y}.
\end{equation*}
%
%
and $ \breve X_i = (X_i - m_{10})/s_x$, $\breve Y_i = (Y_i - m_{01})/s_y$.
Denote the empirical moments of the $(\breve X_i, \breve Y_i)$
\begin{equation*}
	\breve{m}_{kl} = \frac1n\, \sum_{i=1}^n \breve X_i^k\, \breve Y_i^l,\qquad k,l \in \{0,1,\ldots, 4\}.
\end{equation*}
Then using $ \breve{m}_{20} = \breve{m}_{02} = 1$ we may write
\begin{equation}
	\hrhont = \rpear(\breve X^2,\breve Y^2) = \frac{\breve{m}_{22} - 1}{\big((\breve{m}_{40} - 1)\, (\breve{m}_{04} - 1)\big)^{1/2}},
\end{equation}
%
%
%

%
In terms of the empirical moments of $(X_i,Y_i)$, we have that
\begin{align}\label{eq:momentscentered}
	\breve{m}_{22} = \frac{1}{s_x^2\, s_y^2}\, \big( & m_{22} - 2\, m_{21}m_{01} - 2\, m_{12}m_{10} +  m_{20}m_{01}^2 + m_{10}^2 m_{02} \\
   & + 4\, m_{11}m_{10}m_{01} - 3 m_{10}^2m_{01}^2 \big)\nonumber\\
		\breve{m}_{40} = \frac{1}{s_x^4}\, \big( & m_{40} - 4 m_{30}m_{10} + 6 m_{20}\, m_{10}^2 - 3 m_{10}^4\big),
\end{align}
and similarly for $\breve{m}_{04}$.
Since both $\hrhono$ and $\hrhont$ use standardization, we may now assume that $X = \breve X$ and $Y= \breve Y$, 
%
that is $\Ex[X] = \Ex[Y] = 0$ and $\Ex[X^2] = \Ex[Y^2] = 1$ (but note that we still have to work with the empirical moments in the form \eqref{eq:momentscentered}). Recall the notation
$e_{kl} = \Ex[X^k Y^l]$ and $c_{kl,sr} = \Cov(X^k Y^l, X^s Y^r)$.
Then
%
\begin{align}\label{eq:cltmoments}
	\sqrt n \, \left(\begin{pmatrix}
		m_{10}\\
		m_{01}\\
		m_{20}\\
		m_{02}\\
		m_{11}\\
		m_{30}\\
		m_{03}\\
		m_{21}\\
		m_{12}\\
		m_{40}\\
		m_{04}\\
		m_{22}\\
	\end{pmatrix}  - \begin{pmatrix}
		0\\
		0\\
		1\\
		1\\
		\rho\\
		e_{30}\\
		e_{03}\\
		e_{21}\\
		e_{12}\\
		e_{40}\\
		e_{04}\\
		e_{22}\\
	\end{pmatrix}\right) \stackrel{\cL}{\to} \mathcal N_{12}\left(0, \Sigma\right), \qquad \Sigma_m =  \begin{pmatrix}
		c_{10,10} & c_{10,01} & \hdots & c_{10,22}\\
		c_{10,01} & c_{01,01} & \hdots & c_{01,22}\\
		\vdots & & & \vdots\\
		c_{10,22} & c_{01,22} & \hdots & c_{22,22}\\	
	\end{pmatrix}  .
\end{align}
%
Now consider the map $g : \R^{12} \to \R^6$ given by
$$g\big(		m_{10},m_{01},
m_{20},
m_{02},
m_{11},
m_{30},
m_{03},
m_{21},
m_{12},
\ldots,
m_{22} \big) = \big(s_x^2, s_y^2, s_{xy}, \breve{m}_{40}, \breve{m}_{04}, \breve{m}_{22} \big).$$
Its Jacobian $\D g$ at $(0,
0,
1,
1,
\rho,
e_{30},
e_{03},
e_{21},
e_{12},
e_{40},
e_{04},
e_{22})$ is computed as
\begin{align}\label{eq:matrixA}
A := 	& \D g ((0,
	0,
	1,
	1,
	\rho,
	e_{30},
	e_{03},
	e_{21},
	e_{12},
	e_{40},
	e_{04},
	e_{22}))\nonumber\\
	=  & \left(\begin{array}{cccccccccccc}
		0 & 0 & 1 & 0 & 0 & 0 & 0 & 0 & 0 & 0 & 0 & 0 \\
		0 & 0 & 0 & 1 &  0 & 0 & 0 & 0 & 0 & 0 & 0 & 0 \\
		0 & 0 & 0 & 0 &  1 & 0 & 0 & 0 & 0 & 0 & 0 & 0 \\
		- 4 e_{30}&  0 & - 2 e_{40} & 0 & 0 & 0 & 0 & 0 & 0 & 1 & 0 & 0 \\
		0 &  - 4 e_{03}&  0 & - 2 e_{04} & 0 & 0 & 0 & 0 & 0 & 0 & 1  & 0 \\
		- 2 e_{12} &  - 2 e_{21}&  -  e_{22} & - e_{22} & 0 & 0 & 0 & 0 & 0 & 0 & 0  & 1 \\
	\end{array}\right).
\end{align}
Furthermore, for the map
$$ h\big(s_x^2, s_y^2, s_{xy}, \breve{m}_{40}, \breve{m}_{04}, \breve{m}_{22} \big) = \Big(\frac{s_{xy}}{(s_x^2\, s_y^2)^{1/2}},\frac{\breve{m}_{22} -1}{\big((\breve{m}_{40} - 1)\,(\breve{m}_{04} - 1)\big)^{1/2}}  \Big),$$
observing
$$\rho_2 = \frac{e_{22} -1}{\big((e_{40} - 1)\,(e_{04} - 1)\big)^{1/2}}$$
and setting $a= g (0,
0,
1,
1,
\rho,
e_{30},
e_{03},
e_{21},
e_{12},
e_{40},
e_{04},
e_{22}),$
%
%
%
%
 we compute
\begin{align}\label{eq:matrixB}
	B:= \D h(a ) &=
	\left(\begin{array}{cccccc}
		- \rhono/2 & - \rhono/2 & 1 & 0 & 0 & 0  \\
		0 & 0 & 0 & - \frac{\rhont}{2\, (e_{40}-1)}  &  - \frac{\rhont}{2\, (e_{04}-1)}  & \frac{1}{\big( (e_{40} - 1)\,(e_{04} - 1)\big)^{1/2}}    \\
	\end{array}\right).
\end{align}
Observe that
$$ (\hrhono, \hrhont) = h\big( g\big(		m_{10},m_{01},
m_{20},
m_{02},
m_{11},
m_{30},
m_{03},
m_{21},
m_{12},
m_{40},
m_{04},
m_{22}  \big) \big),$$
as well as
$$ (\rhono, \rhont) = h\big( g\big(		0,
0,
1,
1,
\rho,
e_{30},
e_{03},
e_{21},
e_{12},
e_{40},
e_{04},
e_{22}
 \big) \big)$$
by the assumed standardization of $X$ and $Y$.
Asymptotic normality as claimed in Proposition 1 
follows from the $\Delta$ - method together with the form of the asymptotic covariance matrix as given in the appendix of the main paper follows from \eqref{eq:cltmoments}, \eqref{eq:matrixB} and the chain rule:
$$\D h\big( g(		0,
0,
1,
1,
\rho,
e_{30},
e_{03},
e_{21},
e_{12},
e_{40},
e_{04},
e_{22}
) \big) = B\, A  = M.$$

%

\end{proof}


%

\section{Proof of Lemmas \ref{lem:maxnormal} and \ref{lem:maxnormal1} }

\begin{proof} 
	To show  \eqref{eq:distrmax}, 
    note for the conditional distribution of $v$ given $u$ that
	$$ \cL(V | U = t) = \mathcal N ( \tau \, \sigma_2 \, t / \sigma_1 , \sigma_2^2\, (1- \tau^2)).$$
	Hence the formula follows from
	$$ \PP\big(Z \leq z \big) = \PP\big(U \leq z, V \leq z \big) = \int_{- \infty}^z\, \PP\big(V \leq z | U = t \big)\, \dd \PP_U(t).$$
	To show \eqref{eq:densermax}, 
    we use the formula
	\begin{equation}\label{eq:derivativegeneral}
		I(z)' = \int_{\alpha(z)}^{\beta(z)} \partial_z f(t,z)\, \dd t + \beta'(z)\, f(\beta(z),z) - \alpha'(z)\, f(\alpha(z),z)
	\end{equation}
	for the derivative of
	$$I(z) = \int_{\alpha(z)}^{\beta(z)}  f(t,z)\, \dd t.$$
	Apply this to \eqref{eq:distrmax}.
    The first term in \eqref{eq:densermax}
    then corresponds to the second in the derivative of $I(z)$. For the first term in $I(z)$, one shows that
	\begin{equation}\label{eq:formaludens}
		\varphi(t;0, \sigma_1^2)\, \varphi\Big(z; \tau \, \sigma_2 \, t / \sigma_1, \sigma_2\, \sqrt{1-\tau^2} \Big) =  \varphi(z;0, \sigma_2^2)\, \varphi\Big(t; \tau \, \sigma_1 \, z / \sigma_2, \sigma_1\, \sqrt{1-\tau^2} \Big),
	\end{equation}
	from which the second term in \eqref{eq:densermax}
    readily follows by integration.
\end{proof}
\begin{proof}[Proof of Lemma \ref{lem:maxnormal1}]
	%
	%
	Since
	\begin{align*}
		\PP\big(Z \leq z \big) & = 2\, \PP\big(0 \leq U \leq z, -z \leq V \leq z \big)\\
		& = 2\, \int_0^z\, \PP\big(-z \leq V \leq z | U = t \big)\, \dd \PP_U(t),
	\end{align*}
	\eqref{eq:distrmax1} follows readily from the conditional normal distribution of $V | U=t$.

	For \eqref{eq:densermax1}, use \eqref{eq:derivativegeneral} in \eqref{eq:distrmax1}, and observe \eqref{eq:formaludens}.
	
\end{proof}

\section{Additional simulations: Asymptotic confidence intervals} \label{sec-supp7}

Tables \ref{supp-tab3}-\ref{supp-tab6} show the empirical coverage probability and simulated mean length of the various confidence intervals defined in Subsection 4.2 for sample size $n=50$ and $n=800$, based on $10000$ replications.

\begin{table}[ht]
\centering
\begin{tabular}{lrrrrrr}
  \hline
distribution & $\text{ci}_{L,l}^{(p)}$ & $\text{ci}_{L,l}^{(p,c)}$ & $\text{ci}_{L,l}^{(b)}$ & $\text{ci}_{L,l}^{(b,c)}$ & $\text{ci}_{L}^{(b)}$ & $\text{ci}_{L}^{(b,c)}$ \\
  \hline
BVN(0) & 0.79 & 0.90 & 0.84 & 0.93 & 0.83 & 0.92 \\ 
  BVN(0.5) & 0.92 & 0.97 & 0.93 & 0.98 & 0.93 & 0.97 \\ 
  BVN(0.95) & 0.91 & 0.99 & 0.93 & 0.99 & 0.98 & 0.98 \\ 
  MN1 & 0.92 & 0.97 & 0.96 & 0.99 & 0.97 & 0.99 \\ 
  MN2 & 0.92 & 0.97 & 0.95 & 0.99 & 0.96 & 0.99 \\ 
  MN3 & 0.88 & 0.95 & 0.91 & 0.97 & 0.92 & 0.98 \\ 
  MN & 0.86 & 0.94 & 0.88 & 0.95 & 0.84 & 0.93 \\ \hline
  BVT5(0) &   -    &  -   &   -  &   -  & 0.97 & 0.99 \\ 
  BVT2(0) &   -    &  -   &   -  &   -  & 0.86 & 0.87 \\ 
  BVT1(0) &   -    &  -   &   -  &   -  & 0.82 & 0.82 \\ 
  BVT5(0.2) &   -    &  -   &   -  &   -  & 0.96 & 0.99 \\ 
  BVT2(0.2) &   -    &  -   &   -  &   -  & 0.88 & 0.88 \\ 
  BVT1(0.2) &   -    &  -   &   -  &   -  & 0.83 & 0.84 \\ \hline 
  UnifDisc & 0.97 & 0.98 & 0.98 & 0.99 & 0.98 & 1.00 \\ 
  UnifRhomb & 0.96 & 0.97 & 0.98 & 0.98 & 0.98 & 1.00 \\ 
  UnifTriangle & 0.93 & 0.97 & 0.93 & 0.98 & 0.92 & 0.97 \\ \hline
  RegLin1 & 0.91 & 0.99 & 0.92 & 0.99 & 0.94 & 0.98 \\ 
  RegLin2 & 0.92 & 0.98 & 0.93 & 0.98 & 0.93 & 0.98 \\ 
  RegQuad1 & 0.94 & 0.98 & 0.96 & 0.98 & 0.99 & 0.99 \\ 
  RegQuad2 & 0.96 & 1.00 & 0.97 & 1.00 & 0.97 & 1.00 \\ 
  RegTrig1 & 0.95 & 0.99 & 0.96 & 0.99 & 0.95 & 0.99 \\ 
  RegTrig2 & 0.95 & 0.99 & 0.95 & 0.99 & 0.95 & 0.99 \\ 
   \hline
\end{tabular}
\caption{Empirical coverage probability, $n=50$, $B=10000$. \label{supp-tab3}}
\end{table}

\begin{table}[ht]
\centering
\begin{tabular}{lrrrrrr}
  \hline
distribution & $\text{ci}_{L,l}^{(p)}$ & $\text{ci}_{L,l}^{(p,c)}$ & $\text{ci}_{L,l}^{(b)}$ & $\text{ci}_{L,l}^{(b,c)}$ & $\text{ci}_{L}^{(b)}$ & $\text{ci}_{L}^{(b,c)}$ \\
  \hline
  BVN(0)    & 0.39 & 0.40 & 0.41 & 0.41 & 0.41 & 0.41 \\ 
  BVN(0.5)  & 0.41 & 0.49 & 0.43 & 0.53 & 0.42 & 0.52 \\ 
  BVN(0.95) & 0.05 & 0.10 & 0.06 & 0.12 & 0.09 & 0.21 \\ 
  MN1       & 0.50 & 0.53 & 0.54 & 0.56 & 0.51 & 0.53 \\ 
  MN2       & 0.50 & 0.54 & 0.54 & 0.57 & 0.51 & 0.54 \\ 
  MN3       & 0.50 & 0.54 & 0.54 & 0.57 & 0.51 & 0.54 \\ 
  MN        & 0.42 & 0.43 & 0.43 & 0.44 & 0.41 & 0.41 \\ \hline
  BVT5(0)   &   -   &  -  &   -  &   -  & 0.49 & 0.50 \\ 
  BVT2(0)   &   -   &  -  &   -  &   -  & 0.56 & 0.61 \\ 
  BVT1(0)   &   -   &  -  &   -  &   -  & 0.53 & 0.63 \\ 
  BVT5(0.2) &   -   &  -  &   -  &   -  & 0.51 & 0.53 \\ 
  BVT2(0.2) &   -   &  -  &   -  &   -  & 0.56 & 0.61 \\ 
  BVT1(0.2) &   -   &  -  &   -  &   -  & 0.52 & 0.62 \\ \hline 
  UnifDisc  & 0.39 & 0.43 & 0.41 & 0.44 & 0.31 & 0.36 \\ 
  UnifRhomb & 0.29 & 0.33 & 0.31 & 0.36 & 0.25 & 0.31 \\ 
UnifTriangle& 0.37 & 0.42 & 0.38 & 0.44 & 0.44 & 0.53 \\ \hline
  RegLin1   & 0.26 & 0.38 & 0.27 & 0.40 & 0.28 & 0.43 \\ 
  RegLin2   & 0.37 & 0.45 & 0.38 & 0.48 & 0.38 & 0.48 \\ 
  RegQuad1  & 0.55 & 0.62 & 0.58 & 0.64 & 0.37 & 0.49 \\ 
  RegQuad2  & 0.49 & 0.52 & 0.51 & 0.53 & 0.43 & 0.47 \\ 
  RegTrig1  & 0.42 & 0.49 & 0.44 & 0.50 & 0.41 & 0.45 \\ 
  RegTrig2  & 0.44 & 0.48 & 0.45 & 0.49 & 0.43 & 0.46 \\ 
   \hline
\end{tabular}
\caption{Simulated mean length of confidence intervals, $n=50$, $B=10000$. \label{supp-tab4}}
\end{table}

\begin{table}
\centering
\begin{tabular}{lrrrrrr}
  \hline
distribution & $\text{ci}_{L,l}^{(p)}$ & $\text{ci}_{L,l}^{(p,c)}$ & $\text{ci}_{L,l}^{(b)}$ & $\text{ci}_{L,l}^{(b,c)}$ & $\text{ci}_{L}^{(b)}$ & $\text{ci}_{L}^{(b,c)}$ \\
  \hline
BVN(0) & 0.88 & 0.94 & 0.88 & 0.94 & 0.88 & 0.94 \\
  BVN(0.5) & 0.95 & 0.98 & 0.95 & 0.98 & 0.95 & 0.97 \\
  BVN(0.95) & 0.95 & 0.98 & 0.95 & 0.98 & 0.95 & 0.97 \\
  MN1 & 0.93 & 0.94 & 0.93 & 0.94 & 0.91 & 0.92 \\
  MN2 & 0.94 & 0.95 & 0.94 & 0.95 & 0.93 & 0.94 \\
  MN3 & 0.95 & 0.98 & 0.95 & 0.98 & 0.95 & 0.97 \\
  MN & 0.90 & 0.95 & 0.90 & 0.95 & 0.88 & 0.94 \\ \hline
  BVT5(0) &   -    &  -   &   -  &   -  & 0.90 & 0.90 \\
  BVT2(0) &   -    &  -   &   -  &   -  & 0.91 & 0.92 \\
  BVT1(0) &   -    &  -   &   -  &   -  & 0.92 & 0.93 \\
  BVT5(0.2) &  -   &  -   &   -  &   -  & 0.94 & 0.95 \\
  BVT2(0.2) &  -   &  -   &   -  &   -  & 0.91 & 0.92 \\
  BVT1(0.2) &  -   &  -   &   -  &   -  & 0.92 & 0.94 \\ \hline
  UnifDisc & 0.95 & 0.97 & 0.95 & 0.97 & 0.95 & 0.99 \\
  UnifDrhomb & 0.95 & 0.97 & 0.95 & 0.97 & 0.93 & 1.00 \\
  UnifTriangle & 0.95 & 0.97 & 0.95 & 0.97 & 0.94 & 0.97 \\ \hline
  RegLin1 & 0.95 & 0.98 & 0.95 & 0.98 & 0.94 & 0.98 \\
  RegLin2 & 0.95 & 0.98 & 0.95 & 0.98 & 0.95 & 0.98 \\
  RegQuad1 & 0.95 & 0.97 & 0.95 & 0.97 & 0.92 & 0.97 \\
  RegQuad2 & 0.95 & 0.97 & 0.95 & 0.97 & 0.93 & 0.96 \\
  RegTrig1 & 0.95 & 0.97 & 0.95 & 0.97 & 0.95 & 0.96 \\
  RegTrig2 & 0.95 & 0.97 & 0.95 & 0.97 & 0.95 & 0.96 \\
   \hline
\end{tabular}
\caption{Empirical coverage probability, $n=800$, $B=10000$. \label{supp-tab5}}
\end{table}

\begin{table}
\centering
\begin{tabular}{lrrrrrr}
  \hline
distribution & $\text{ci}_{L,l}^{(p)}$ & $\text{ci}_{L,l}^{(p,c)}$ & $\text{ci}_{L,l}^{(b)}$ & $\text{ci}_{L,l}^{(b,c)}$ & $\text{ci}_{L}^{(b)}$ & $\text{ci}_{L}^{(b,c)}$ \\
  \hline
BVN(0) & 0.11 & 0.11 & 0.11 & 0.11 & 0.11 & 0.11 \\
  BVN(0.5) & 0.10 & 0.15 & 0.10 & 0.15 & 0.10 & 0.14 \\
  BVN(0.95) & 0.01 & 0.03 & 0.01 & 0.03 & 0.01 & 0.04 \\
  MN1 & 0.19 & 0.20 & 0.19 & 0.20 & 0.18 & 0.19 \\
  MN2 & 0.19 & 0.20 & 0.19 & 0.20 & 0.17 & 0.18 \\
  MN3 & 0.17 & 0.19 & 0.17 & 0.19 & 0.16 & 0.18 \\
  MN & 0.11 & 0.11 & 0.11 & 0.11 & 0.10 & 0.11 \\ \hline
  BVT5(0) &  -    &  -   &   -  &   -  & 0.19 & 0.20 \\
  BVT2(0) &  -    &  -   &   -  &   -  & 0.18 & 0.20 \\
  BVT1(0) &  -    &  -   &   -  &   -  & 0.12 & 0.17 \\
  BVT5(0.2) &  -  &  -   &   -  &   -  & 0.18 & 0.19 \\
  BVT2(0.2) &  -  &  -   &   -  &   -  & 0.18 & 0.20 \\
  BVT1(0.2) &  -  &  -   &   -  &   -  & 0.12 & 0.16 \\ \hline
  UnifDisc & 0.09 & 0.10 & 0.09 & 0.10 & 0.04 & 0.07 \\
  UnifDrhomb & 0.06 & 0.07 & 0.06 & 0.07 & 0.03 & 0.06 \\
  UnifTriangle & 0.09 & 0.10 & 0.09 & 0.10 & 0.11 & 0.14 \\ \hline
  RegLin1 & 0.07 & 0.10 & 0.07 & 0.10 & 0.07 & 0.11 \\
  RegLin2 & 0.09 & 0.12 & 0.09 & 0.12 & 0.09 & 0.12 \\
  RegQuad1 & 0.14 & 0.16 & 0.14 & 0.16 & 0.08 & 0.12 \\
  RegQuad2 & 0.14 & 0.15 & 0.14 & 0.15 & 0.11 & 0.13 \\
  RegTrig1 & 0.11 & 0.12 & 0.11 & 0.12 & 0.10 & 0.10 \\
  RegTrig2 & 0.12 & 0.13 & 0.12 & 0.13 & 0.11 & 0.12 \\
   \hline
\end{tabular}
\caption{Simulated mean length of confidence intervals, $n=800$, $B=10000$. \label{supp-tab6}}
\end{table}

\section{Additional simulations: Testing for independence} \label{sec-supp6}

Tables \ref{supp-tab1} and \ref{supp-tab2} show the empirical power for the tests of independence described in Subsection 4.1  for sample size $n=25$ and $n=400$, based on $B=10000$ and $1000$ replications, respectively.

\begin{table}[ht]
\centering
\setlength{\tabcolsep}{4pt}
\begin{tabular}{lrrrrrrrrrrr}
  \hline
distribution & \hrpear & \hrspear & \hrhoamn & $\widehat{\rho_{\text{L(a)}}}$ & $\widehat{\rho_{\text{L(p)}}}$ & \hrace & \hdcor & \tstar & \hxi & \hhsic & \hbcor \\
  \hline
  BVN(0) & 0.05 & 0.05 & 0.05 & 0.05 & 0.05 & 0.05 & 0.05 & 0.05 & 0.05 & 0.05 & 0.05 \\ 
  BVN(0.5) & 0.74 & 0.68 & 0.65 & 0.61 & 0.62 & 0.19 & 0.69 & 0.61 & 0.29 & 0.41 & 0.43 \\ 
  BVN(0.95) & 1.00 & 1.00 & 1.00 & 1.00 & 1.00 & 1.00 & 1.00 & 1.00 & 1.00 & 1.00 & 1.00 \\ 
  MN1 & 0.10 & 0.07 & 0.21 & 0.17 & 0.17 & 0.13 & 0.09 & 0.06 & 0.06 & 0.08 & 0.10 \\ 
  MN2 & 0.18 & 0.14 & 0.25 & 0.20 & 0.21 & 0.12 & 0.16 & 0.12 & 0.08 & 0.12 & 0.14 \\ 
  MN3 & 0.27 & 0.22 & 0.30 & 0.26 & 0.27 & 0.12 & 0.24 & 0.19 & 0.11 & 0.16 & 0.18 \\ 
  MN & 0.05 & 0.05 & 0.05 & 0.04 & 0.05 & 0.05 & 0.05 & 0.05 & 0.05 & 0.05 & 0.05 \\ 
  BVT5(0) & 0.12 & 0.06 & 0.19 & 0.13 & 0.14 & 0.13 & 0.12 & 0.06 & 0.06 & 0.08 & 0.10 \\ 
  BVT2(0) & 0.25 & 0.08 & 0.42 & 0.35 & 0.36 & 0.31 & 0.38 & 0.07 & 0.08 & 0.22 & 0.29 \\ 
  BVT1(0) & 0.38 & 0.10 & 0.65 & 0.68 & 0.69 & 0.56 & 0.74 & 0.09 & 0.12 & 0.62 & 0.70 \\ 
  BVT5(0.2) & 0.22 & 0.16 & 0.25 & 0.20 & 0.21 & 0.14 & 0.21 & 0.14 & 0.09 & 0.12 & 0.14 \\ 
  BVT2(0.2) & 0.32 & 0.16 & 0.46 & 0.40 & 0.41 & 0.33 & 0.45 & 0.15 & 0.11 & 0.27 & 0.35 \\ 
  BVT1(0.2) & 0.42 & 0.17 & 0.67 & 0.70 & 0.70 & 0.58 & 0.76 & 0.17 & 0.16 & 0.66 & 0.74 \\ 
  UnifDisc & 0.02 & 0.02 & 0.13 & 0.01 & 0.01 & 0.06 & 0.04 & 0.04 & 0.05 & 0.08 & 0.05 \\ 
  UnifDrhomb & 0.00 & 0.01 & 0.31 & 0.06 & 0.07 & 0.14 & 0.03 & 0.04 & 0.08 & 0.21 & 0.09 \\ 
  UnifTriangle & 0.79 & 0.65 & 0.63 & 0.58 & 0.59 & 0.24 & 0.73 & 0.64 & 0.34 & 0.57 & 0.60 \\ 
  GARCH(2,1) & 0.11 & 0.07 & 0.18 & 0.20 & 0.20 & 0.11 & 0.13 & 0.06 & 0.07 & 0.16 & 0.19 \\ 
  RegLin1 & 0.99 & 0.98 & 0.98 & 0.96 & 0.96 & 0.58 & 0.98 & 0.97 & 0.74 & 0.90 & 0.91 \\ 
  RegLin2 & 0.84 & 0.80 & 0.77 & 0.71 & 0.72 & 0.24 & 0.81 & 0.76 & 0.38 & 0.59 & 0.60 \\ 
  RegQuad1 & 0.12 & 0.12 & 0.45 & 0.17 & 0.19 & 1.00 & 0.86 & 0.85 & 0.99 & 0.97 & 0.99 \\ 
  RegQuad2 & 0.09 & 0.10 & 0.18 & 0.10 & 0.10 & 0.66 & 0.43 & 0.37 & 0.69 & 0.65 & 0.67 \\ 
  RegTrig1 & 0.41 & 0.42 & 0.31 & 0.27 & 0.28 & 0.98 & 0.55 & 0.53 & 1.00 & 0.45 & 0.79 \\ 
  RegTrig2 & 0.29 & 0.31 & 0.21 & 0.21 & 0.21 & 0.59 & 0.34 & 0.35 & 0.83 & 0.26 & 0.37 \\ 
   \hline
\end{tabular}
\caption{Empirical power for several tests of independence with $n=25$, $B=10000$. \label{supp-tab1}}
\end{table}

\begin{table}[ht]
\centering
\setlength{\tabcolsep}{4pt}
\begin{tabular}{lrrrrrrrrrrr}
  \hline
distribution & \hrpear & \hrspear & \hrhoamn & $\widehat{\rho_{\text{L(a)}}}$ & $\widehat{\rho_{\text{L(p)}}}$ & \hrace & \hdcor & \tstar & \hxi & \hhsic & \hbcor \\
  \hline
BVN(0) & 0.05 & 0.05 & 0.05 & 0.06 & 0.05 & 0.05 & 0.05 & 0.05 & 0.05 & 0.04 & 0.05 \\ 
  BVN(0.5) & 1.00 & 1.00 & 1.00 & 1.00 & 1.00 & 1.00 & 1.00 & 1.00 & 1.00 & 1.00 & 1.00 \\ 
  BVN(0.95) & 1.00 & 1.00 & 1.00 & 1.00 & 1.00 & 1.00 & 1.00 & 1.00 & 1.00 & 1.00 & 1.00 \\ 
  MN1 & 0.12 & 0.09 & 0.98 & 0.98 & 0.98 & 0.89 & 0.24 & 0.10 & 0.09 & 0.62 & 0.75 \\ 
  MN2 & 0.88 & 0.89 & 1.00 & 1.00 & 1.00 & 0.92 & 0.94 & 0.89 & 0.19 & 0.92 & 0.96 \\ 
  MN3 & 1.00 & 1.00 & 1.00 & 1.00 & 1.00 & 0.99 & 1.00 & 1.00 & 0.37 & 0.99 & 0.99 \\ 
  MN & 0.08 & 0.07 & 0.06 & 0.05 & 0.04 & 0.06 & 0.08 & 0.07 & 0.05 & 0.08 & 0.08 \\ 
  BVT5(0) & 0.20 & 0.07 & 0.78 & 0.90 & 0.89 & 0.78 & 0.28 & 0.08 & 0.07 & 0.31 & 0.55 \\ 
  BVT2(0) & 0.67 & 0.08 & 0.96 & 1.00 & 1.00 & 1.00 & 1.00 & 0.20 & 0.15 & 1.00 & 1.00 \\ 
  BVT1(0) & 0.90 & 0.09 & 0.98 & 1.00 & 1.00 & 1.00 & 1.00 & 0.96 & 0.46 & 1.00 & 1.00 \\ 
  BVT5(0.2) & 0.92 & 0.95 & 0.98 & 1.00 & 0.99 & 0.92 & 0.99 & 0.94 & 0.21 & 0.86 & 0.93 \\ 
  BVT2(0.2) & 0.83 & 0.92 & 0.99 & 1.00 & 1.00 & 1.00 & 1.00 & 0.97 & 0.32 & 1.00 & 1.00 \\ 
  BVT1(0.2) & 0.92 & 0.88 & 0.99 & 1.00 & 1.00 & 1.00 & 1.00 & 1.00 & 0.63 & 1.00 & 1.00 \\ 
  UnifDisc & 0.01 & 0.02 & 1.00 & 1.00 & 1.00 & 0.99 & 0.21 & 0.16 & 0.12 & 0.97 & 0.98 \\ 
  UnifDrhomb & 0.00 & 0.01 & 1.00 & 1.00 & 1.00 & 1.00 & 1.00 & 0.98 & 0.33 & 1.00 & 1.00 \\ 
  UnifTriangle & 1.00 & 1.00 & 1.00 & 1.00 & 1.00 & 1.00 & 1.00 & 1.00 & 1.00 & 1.00 & 1.00 \\ 
  GARCH(2,1) & 0.31 & 0.08 & 1.00 & 1.00 & 1.00 & 0.99 & 0.95 & 0.29 & 0.17 & 0.99 & 1.00 \\ 
  RegLin1 & 1.00 & 1.00 & 1.00 & 1.00 & 1.00 & 1.00 & 1.00 & 1.00 & 1.00 & 1.00 & 1.00 \\ 
  RegLin2 & 1.00 & 1.00 & 1.00 & 1.00 & 1.00 & 1.00 & 1.00 & 1.00 & 1.00 & 1.00 & 1.00 \\ 
  RegQuad1 & 0.11 & 0.13 & 1.00 & 1.00 & 1.00 & 1.00 & 1.00 & 1.00 & 1.00 & 1.00 & 1.00 \\ 
  RegQuad2 & 0.08 & 0.09 & 0.99 & 1.00 & 1.00 & 1.00 & 1.00 & 1.00 & 1.00 & 1.00 & 1.00 \\ 
  RegTrig1 & 1.00 & 1.00 & 1.00 & 1.00 & 1.00 & 1.00 & 1.00 & 1.00 & 1.00 & 1.00 & 1.00 \\ 
  RegTrig2 & 1.00 & 1.00 & 1.00 & 1.00 & 1.00 & 1.00 & 1.00 & 1.00 & 1.00 & 1.00 & 1.00 \\ 
   \hline
\end{tabular}
\caption{Empirical power for several tests of independence with $n=400$, $B=1000$. \label{supp-tab2}}
\end{table}

\end{document}